\newcommand{\R}{\mathbb{R}}
\newcommand{\T}{\mathbb{T}}
\newcommand{\C}{\mathbb{C}}
\newcommand{\Z}{\mathbb{Z}}
\newcommand{\N}{\mathbb{N}}
\newcommand{\cB}{\mathcal{B}}
\newcommand{\cS}{\mathcal{S}}
\newcommand{\cP}{\mathcal{P}}
\def\lbl{\label}
\def\be{\begin{equation}}
\def\ee{\end{equation}}
\def\p{\partial}
\newcommand{\iu}{{i\mkern1mu}}
\def\1{\mathbf{1}}
\newcommand{\cE}{\mathcal{E}}
\newcommand{\cyc}[1]{\operatorname{Cyc}\left(#1\right)}
\newtheorem{theorem}{Theorem}[section]
\newtheorem{lemma}[theorem]{Lemma}
\newtheorem{remark}[theorem]{Remark}
\newtheorem{definition}[theorem]{Definition}
\newtheorem{assumption}[theorem]{Assumption}
\title{Kuramoto model on Sierpinski Gasket I: Harmonic maps}
\author{Georgi S. Medvedev\thanks{Department of Mathematics, Drexel University,
		{\tt medvedev@drexel.edu}} \and Mathew S. Mizuhara\thanks{Department of 
		Mathematics and Statistics,
		The College of New Jersey,
		{\tt  mizuharm@tcnj.edu}}}
\begin{document}

\maketitle

\begin{abstract}
  Motivated by the study of attractors in the Kuramoto model (KM) on graphs approximating the
  Sierpinski gasket (SG), we revisit the problem of harmonic maps (HMs) from SG to the circle,
  first considered by Strichartz. We provide a geometric proof of Strichartz’s theorem,
  which states that for a prescribed degree and suitable boundary conditions, there exists a
  unique HM from SG to the circle.
  Furthermore, we extend this result to HMs on post-critically finite (p.c.f.) fractals.

  For continuous functions on SG, we define a degree given by vector of integers of arbitrary finite
  length. We show that the degree determines a homotopy class of a continuous function on SG with
  values in the circle. This provides an analog of the Hopf degree theorem for continuous functions on SG.

  We then move on to analyze the HMs on SG. At the heart of our method lies an original construction
  of the covering spaces for the SG. After lifting continuous functions on the SG with values in the unit circle to 
  continuous real-valued functions on the covering space, we use the harmonic
  extension algorithm to obtain a harmonic function on the covering space.
 The desired HM is obtained by restricting the domain of the resultant harmonic function to the
 fundamental domain and projecting the range to the circle.  Each covering space is  constructed
 separately for HMs of a given homotopy class, capturing its intrinsic topology.
 
 We show that with suitable modifications the method applies to p.c.f. fractals, a large class of self-similar domains.
 We illustrate our method of constructing the HMs using numerical examples of HMs from the SG
 to the circle and discuss the construction of the covering spaces for several representative p.c.f. fractals, 
 including the $3$-level SG, the hexagasket, and the pentagasket.

 The results of this paper provide the foundation for the follow-up work where we give
 a complete description of the attractors in the KM on graphs approximating p.c.f. fractals.
 Specifically, we show that all HMs identified in this paper are stable steady states of the KM. \\

 \noindent \noindent{ \small{\it Keywords.}
   Coupled dynamical system, fractal, Sierpinski gasket, harmonic extension, covering space.}

 \noindent\noindent{\small{\it MSC2020.} 34C15, 58E20, 14F40.}
\end{abstract}

\vfill\newpage

\section{Introduction}
\setcounter{equation}{0}

  The Kuramoto model (KM) of coupled phase oscillators provides a framework for the analysis
  of collective dynamics in coupled systems \cite{Kur84, KurPikRos}. Owing to its analytical
  tractability and broad range of applications, the KM has become a widely used model for studying
  synchronization in large ensembles of oscillators. Applications include phase locking and rhythm
  generation in neuronal networks, synchronous flashing in firefly populations \cite{Str-Sync},
  synchronization of Josephson junction arrays, and control of power networks \cite{DorBul12}.

  The KM is best known for the universal scenario for the transition from incoherence to
  synchronization in large populations of coupled oscillators with random intrinsic frequencies
  \cite{Kur75, StrMir91, Str00, Chi15}.
  In prior work with Chiba, we analyzed this transition for the KM on general networks.
  Our analysis characterized the precise effect of network organization on the onset of synchrony
  and on the spatial structure of the resulting synchronized states
  \cite{ChiMed19a, ChiMed19b, ChiMed22, CMM18, CMM23},
  as well as on the structure of partially coherent and chimera states emerging
  when the coherent state loses stability \cite{MM22, CMM2022}.

  In the present work, we initiate a systematic study of the KM on
  self-similar networks (see \cite{Med2026} for related work for nonlocally coupled networks).
  Many real-world networks exhibit hierarchical organization
  across multiple scales. This feature is prominent in the synaptic organization of the mammalian brain,
  in particular in cortical networks, where connectivity spans several structural levels, from individual neurons
  to microcircuits, to multilayered columns with intra- and inter-columnar connections, and further
  to large-scale interregional networks \cite{Shepard}. Such hierarchical structure supports increasingly
  abstract and integrated information processing \cite{Harris, Tso}. Hierarchical organization is also reflected
  in artificial neural networks and in large technological networks such as the World Wide Web \cite{DillKum02}.
  Direct modeling of the connectivity in these systems is analytically intractable. It is therefore natural to
  seek graph models that capture multiscale organization while remaining amenable to rigorous analysis.
  To this end, we propose to use graphs approximating fractals as models of hierarchical networks.
  In the continuum limit, the KM on such graphs reduces to the heat equation on a fractal,
  providing a setting for rigorous analysis.


We now present the model that will be analyzed in this work. Consider the KM on a sequence of graphs
$\Gamma_n$ approximating a fractal set. To fix the ideas,
as such a set we take SG. SG is defined as a unique attractor of the system of contracting
similarities:
\be\lbl{FixP}
G=\bigcup_{i=1}^3 F_i(G),
\ee
where
$$
F_i(x)=\frac{1}{2} \left( x-v_i\right)+v_i,\quad i\in [3]\doteq\{1,2,3\},
$$
and $v_i$'s are vertices of an equilateral triangle (see Fig.~\ref{fig:sg}) \cite{Falc-FracGeom}. 
$\Gamma_n$ is a graph on $\frac{3(3^n+1)}{2}$ nodes, $n\in\N$.
The set of nodes of $\Gamma_n$ is denoted by $V_n$.
Two nodes $x,y\in V_n$
are adjacent (denoted $x\sim_n y$) if they belong  to the
same $n$-cell, i.e., $x,y \in T_w:=F_w(T)$ for some $w=(w_1,w_2,\dots, w_n)\in \cS^n$. Here,
$\cS$ stands for the alphabet of three symbols $\{1,2,3\}$, $T$ is an equilateral triangle with
vertices $v_1, v_2,$ and $v_3$ and $F_w$ abbreviates
$F_{w_1}\circ F_{w_2}\circ\dots\circ F_{w_n}$. Geometrically, $\Gamma_n=\bigcup_{|w|=n} \partial T_w$
(see Fig.~\ref{fig:sg_graph}).
The set $V_0 = \{v_1,v_2,v_3\}$ will be used to assign the boundary values for 
the Laplace equation on $G$.
\begin{figure}[h]
	\centering
	\includegraphics[width = .35\textwidth]{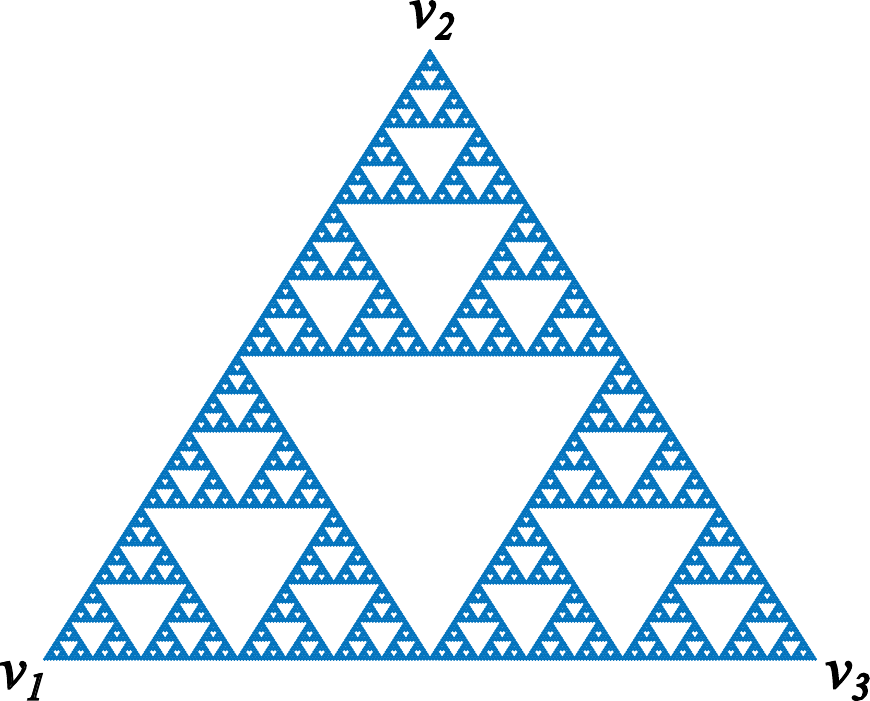}
	\caption{The Sierpinski Gasket.}
	\label{fig:sg}
\end{figure}
 
We use $\Gamma_n$ as a model of a self-similar network. The KM on $\Gamma_n$ has the following
form
\begin{equation}\label{KM}
  \dot u_i =\left(\frac{5}{3}\right)^n \sum_{j:\,j\sim_n i}
  \sin\left( 2\pi \left(u_j- u_i\right)\right),
  \end{equation}
  where $u_i:=u(t,i)$ is a phase oscillator located at node
  $i\in V_n\setminus V_0$. $u_i$ takes values in the unit  circle $\T:=\R/\Z$. The boundary conditions
  imposed at $V_0$ will be discussed below.
  
\begin{figure}[h]
	\centering
	\includegraphics[width=.8\textwidth]{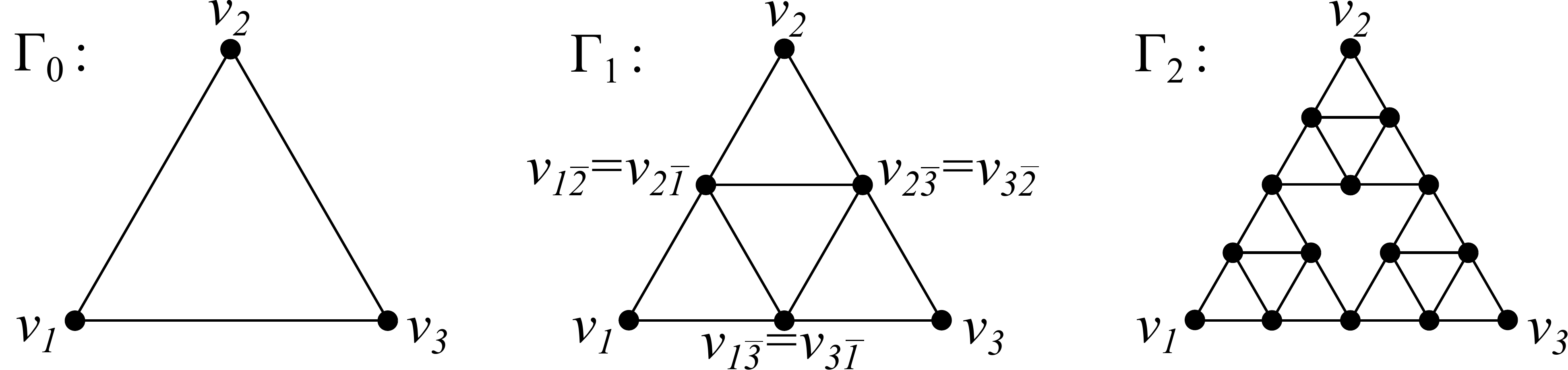}
	\caption{Graph approximations of $G$.}
	\label{fig:sg_graph}
      \end{figure}
  The continuum limit has proved to be an effective tool for studying coupled dynamical systems.
As in the related work on the KM on graphs \cite{Med14a, Med2026}, we derive the continuum limit
of \eqref{KM} as $n \to \infty$. The scaling factor $\left(\frac{5}{3}\right)^n$ in front of the
sum on the right-hand side of \eqref{KM} ensures that the system admits a nontrivial continuum
limit as $n \to \infty$.

  In the limit of large $n$, we expect that \eqref{KM} becomes the heat equation on the SG
  \begin{equation}\label{heat}
    \partial_t u(t,x) =\Delta u(t,x), \quad x\in G.
  \end{equation}
  The expected form of the continuum limit \eqref{heat} is supported both by the analysis of steady states
  of the KM on the SG in the follow-up work \cite{MedMiz2025} and by the study of the KM on
  random geometric graphs \cite{CirGro2025}. If this conjecture is confirmed, the KM on SG along with the
  model in \cite{CirGro2025} will provide an interesting  example of a \textit{nonlinear} spatially extended
  dynamical system, which admits a continuum limit in the form of a \textit{linear} PDE
  (cf.~\cite{MedMiz2025c}).

  Before turning to the continuum limit of the KM on the SG, we outline several preparatory steps.
  The technical difficulties already manifest themselves
  in the analysis of steady state solutions of \eqref{KM} on $\Gamma_n$ for large $n$.

  Setting the right-hand side of \eqref{KM} to zero and expanding $\sin$ to first order yields the following equation
  for steady state solutions:
\begin{equation}\label{Taylor}
\left(\tfrac{5}{3}\right)^n \sum_{j:~j\sim_n i}\big(u_j-u_i\big)
+\left(\tfrac{5}{3}\right)^n \sum_{j:~j\sim_n i} g(u_i,u_j)=0, \quad i\in V_n\setminus V_0,
\end{equation}
where $g$ denotes the nonlinear terms from the Taylor expansion of $\sin$. From the construction of the
Laplacian on the SG (cf.~\cite{Kig01}), one expects the first term on the left-hand side of \eqref{Taylor}
to converge to $\Delta u(x)$ as $n\to\infty$. Then one has to show that the second sum vanishes in the limit.

This program is carried out in \cite{MedMiz2025} via $\Gamma$-convergence methods. A fundamental obstacle,
however, becomes apparent already at this stage: both the theory of the Laplacian on the SG (cf.~\cite{Kig01}) and
$\Gamma$-convergence (cf.~\cite{Braides-beginners}) are formulated for real-valued functions, whereas
here the solutions take values in the torus $\T$. What might appear to be a minor technical difficulty turns
out to have significant consequences: the topology of the codomain strongly influences the structure of
solutions and dictates the analytical framework required for their analysis.

The analysis of both terms in \eqref{Taylor} requires care. In this work, we concentrate on
the first term,
corresponding to the linearized equation for $\T$-valued steady states on the SG:
\be\label{Lap}
\Delta u(x) = 0, \qquad x \in G \setminus V_0,
\ee
with suitable boundary conditions prescribed on $V_0=\{v_1,v_2,v_3\} \subset G$.
We refer to $\T$-valued solutions of
\eqref{Lap} as harmonic maps (HMs) on the SG.
\begin{figure}[h]
	\centering
 \textbf{a}\includegraphics[width = .4\textwidth]{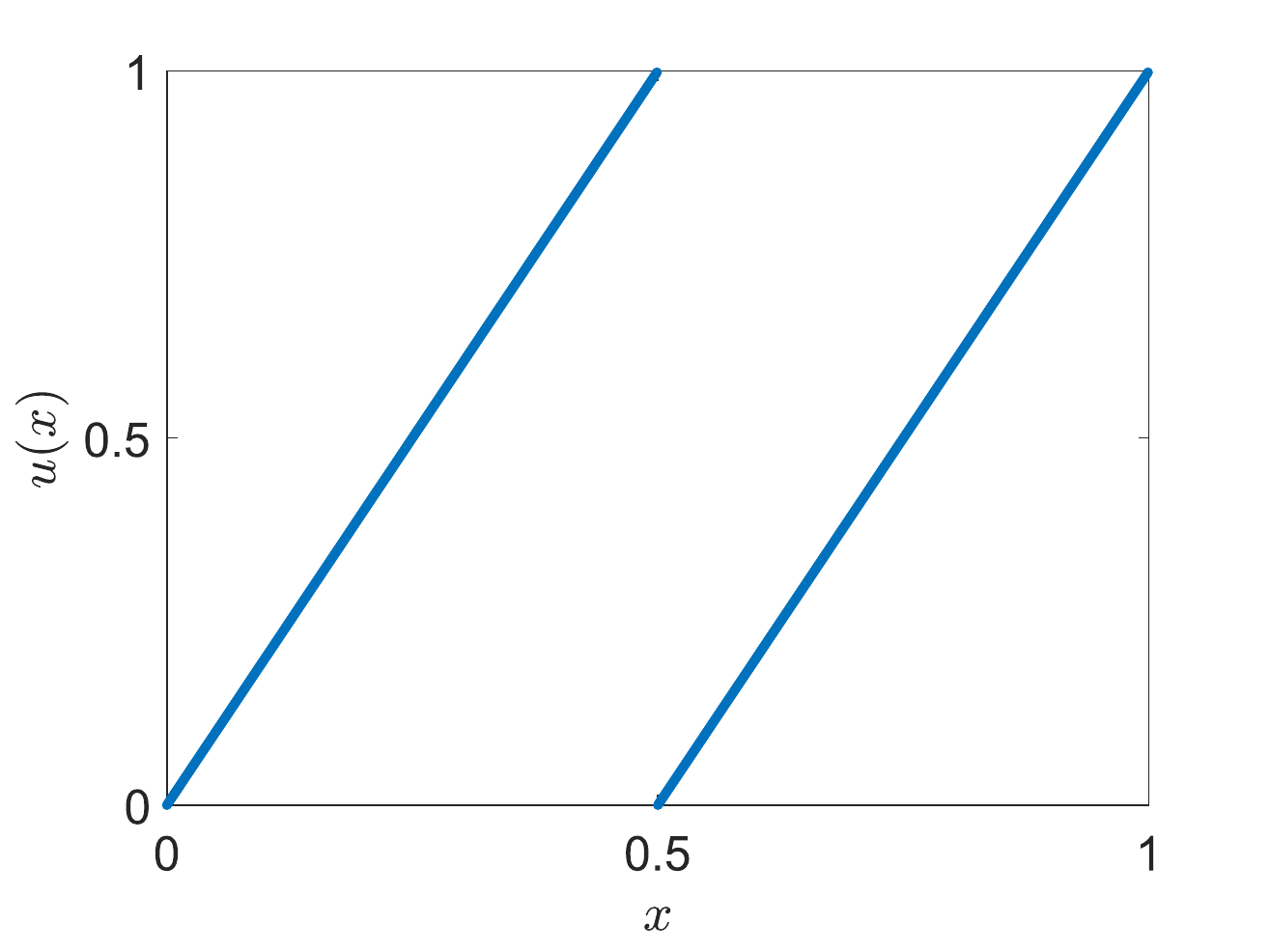}\quad
 \textbf{b}\includegraphics[width = .4\textwidth]{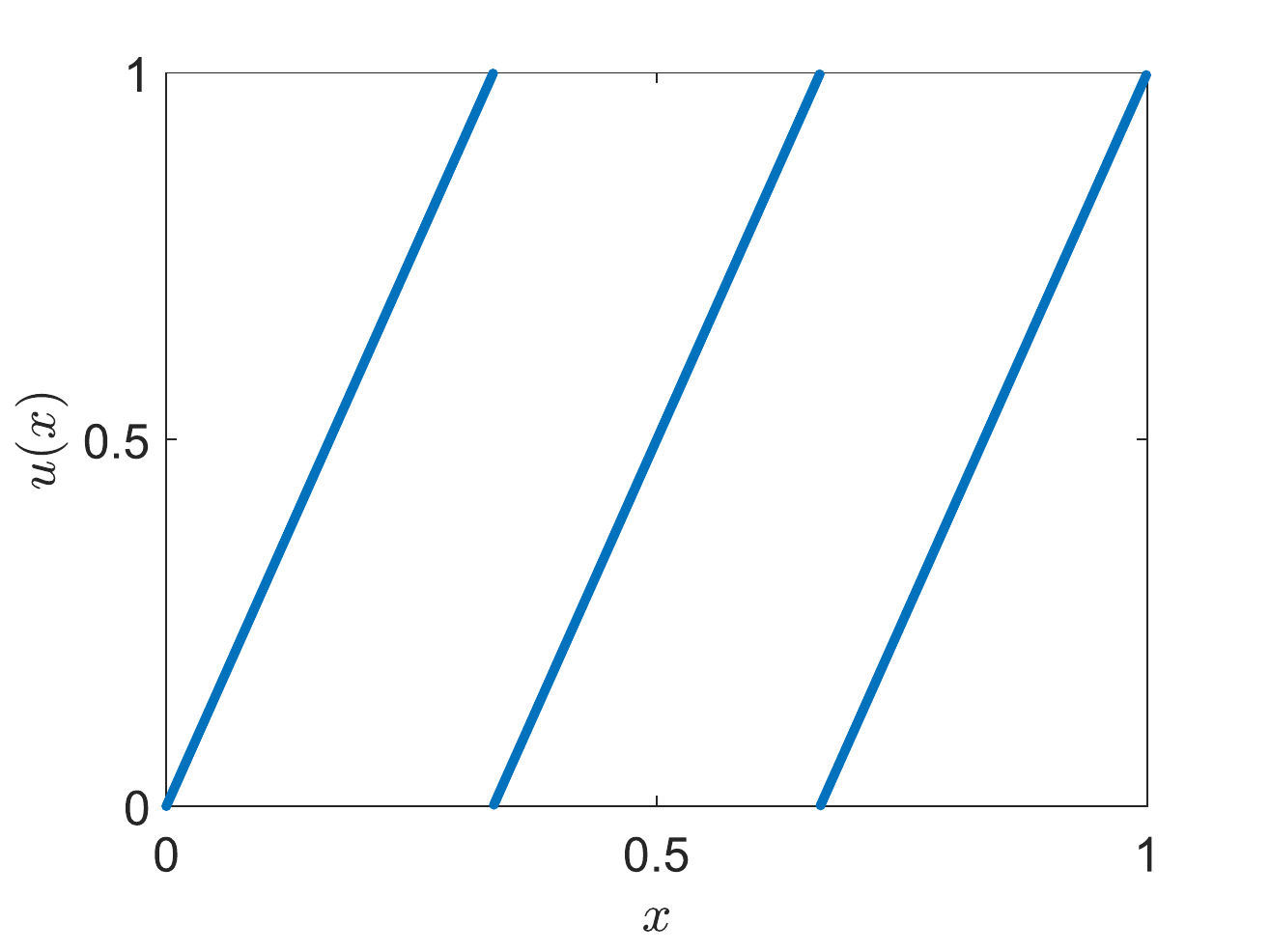}
	\caption{\textbf{a} $\T$-valued solutions of \eqref{Lap} on $\T$:
 \textbf{a}~$2$-twisted state, \textbf{b}~ $3$-twisted state.}
	\label{fig:euclidean}
\end{figure}

The following  examples illustrate how the topology of the codomain influences the structure of solutions
of \eqref{Lap}. Consider the boundary value problem for the Laplacian on $\T$:
\be\lbl{R-bvp}
\Delta u = 0, \qquad u(0)=u(1).
\ee
The only real-valued solutions of \eqref{R-bvp} are constants, $u\equiv c$ with $c\in\R$;
in other words, up to an additive constant, the unique solution is $u\equiv 0$. By contrast,
if $u$ takes values in $\T$, there exist
infinitely many topologically distinct stable solutions,
\be\lbl{q-twist}
u_q(x) = qx + c \pmod 1, \qquad q\in\Z,\; c\in\T.
\ee
Here $q$ denotes the degree of the continuous map $u:\T\to\T$. By the Hopf theorem,
the degree determines the homotopy class of solutions to \eqref{R-bvp}. Thus,
modulo translations, there is a unique HM from $\T$ to itself in each homotopy class.

 \begin{figure}[h]
	\centering
	\textbf{a}\includegraphics[width  =.4\textwidth]{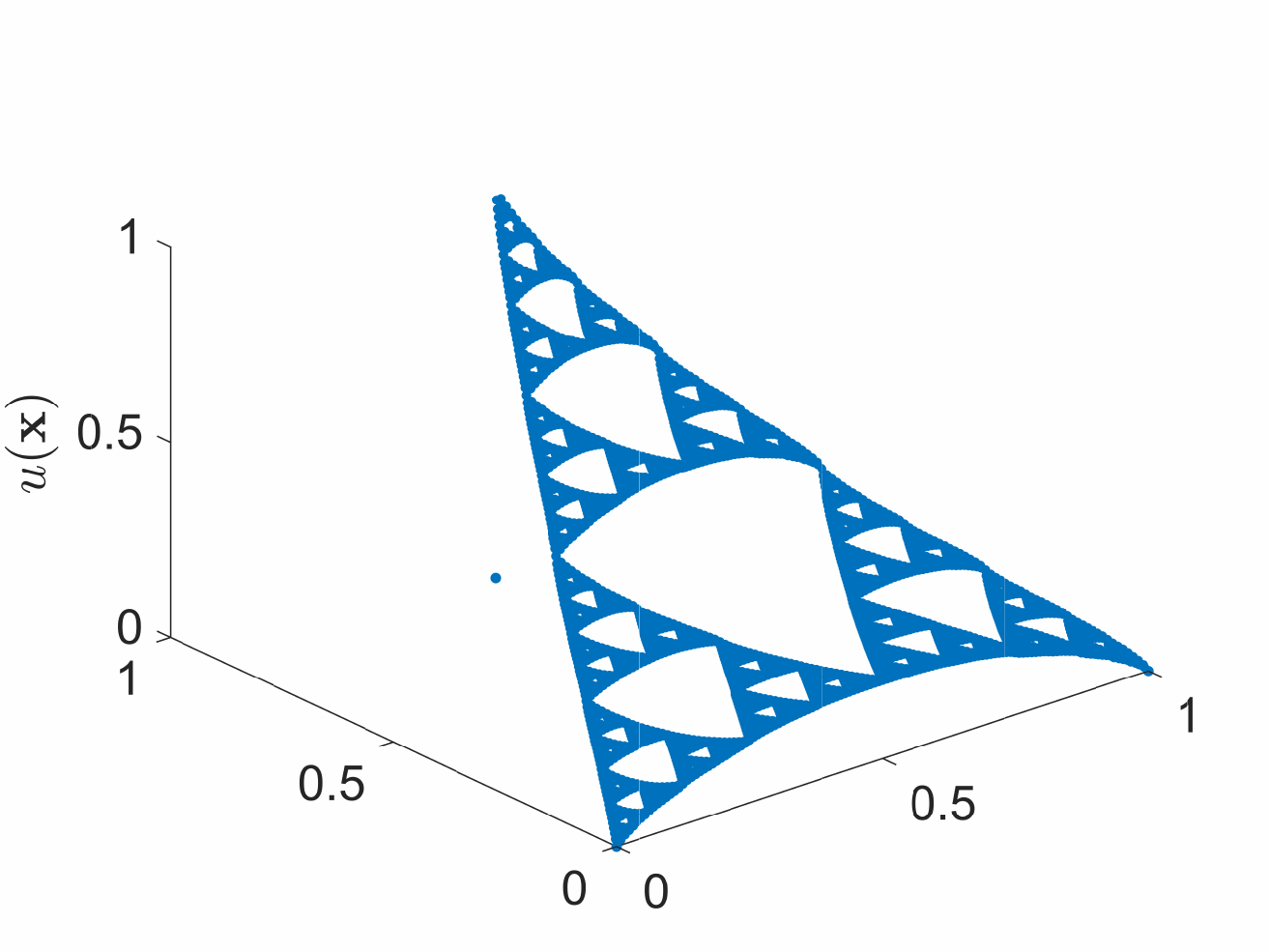}
 \textbf{b}\includegraphics[width = .4\textwidth]{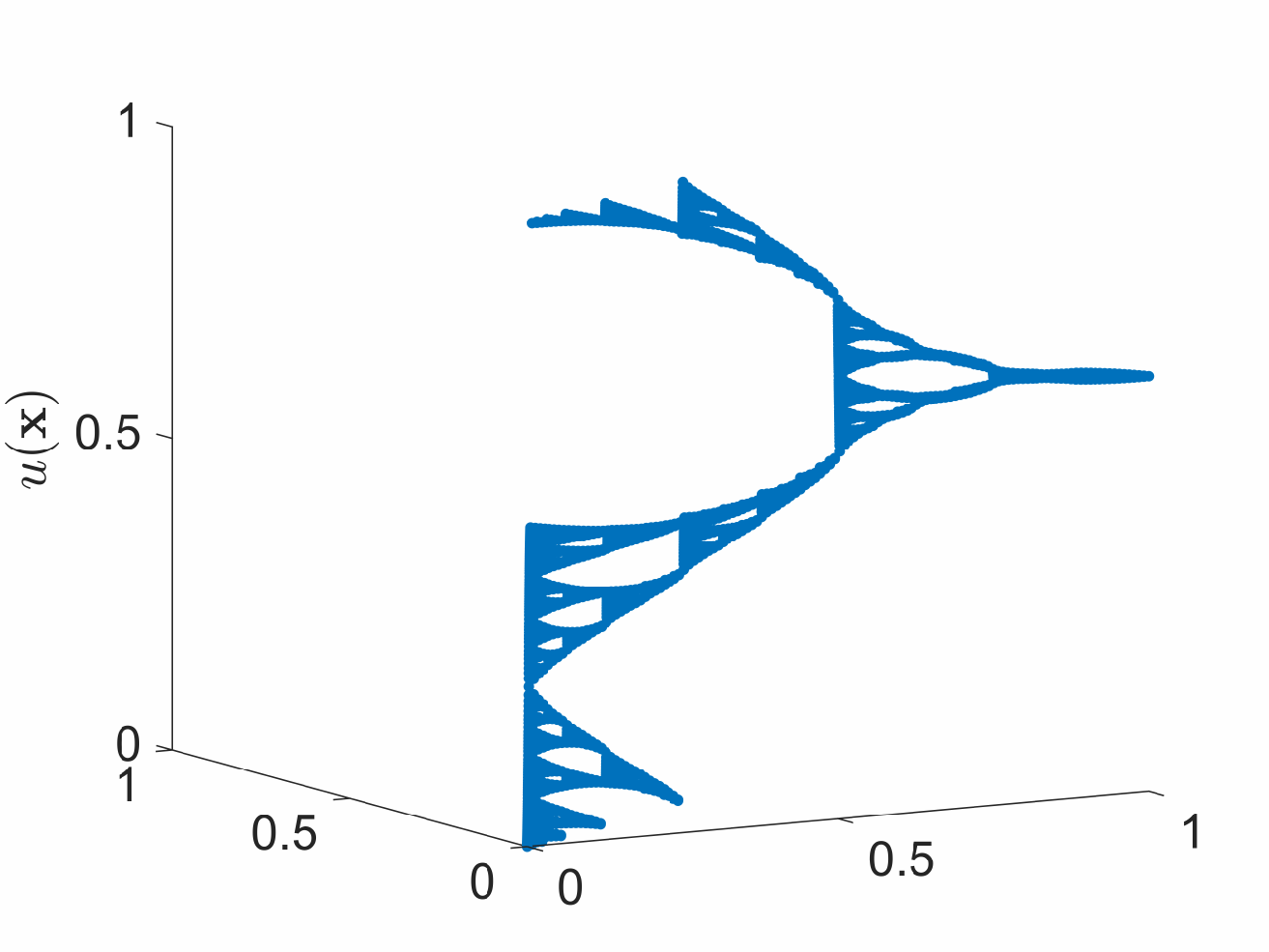}\quad
 \textbf{c}\includegraphics[width = .4\textwidth]{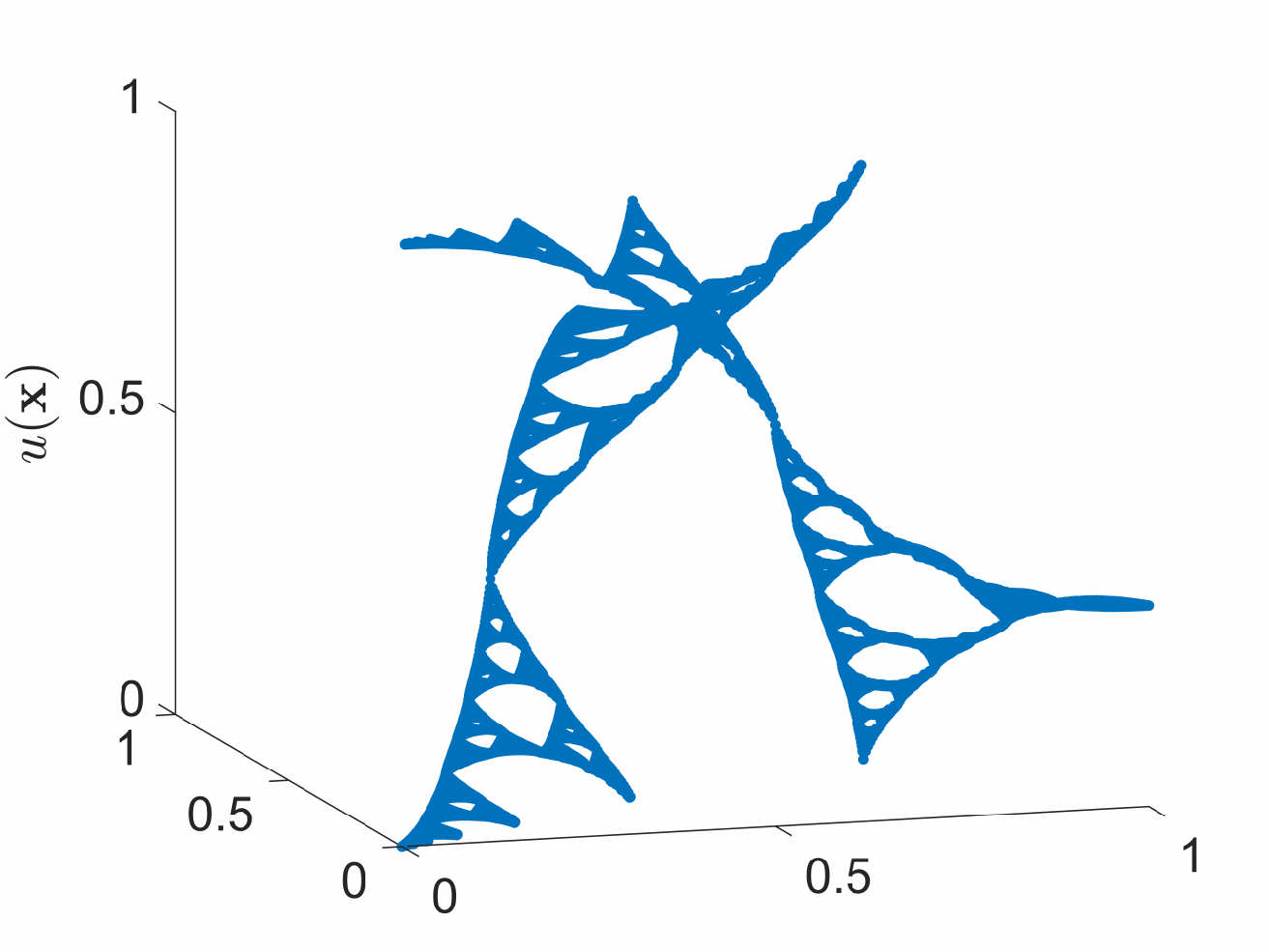}\quad
 \textbf{d}\includegraphics[width = .4\textwidth]{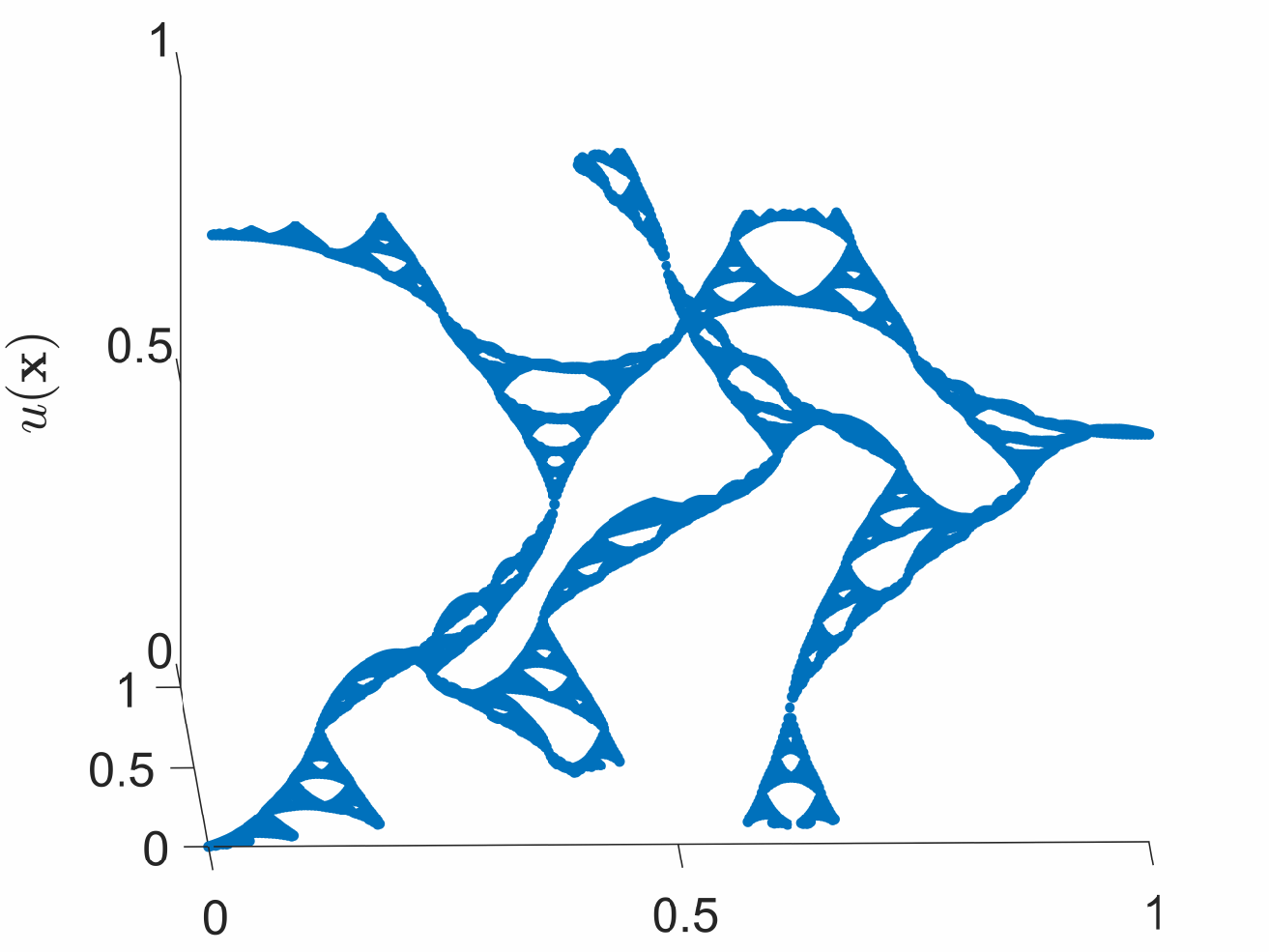}
	\caption{$\T$-valued solutions of \eqref{Lap} of different degrees:
	\textbf{a}~ $(0)$ \textbf{b}~ $(1)$, \textbf{c}~ $(2)$, \textbf{d}~ $(1 1 1 1)$. }
	\label{fig:sg1}
\end{figure}

As a fractal domain, $G$ exhibits a richer topology than the unit circle considered in the previous
example.
There are infinitely many independent loops in the
  SG, generated by the graphs approximating SG (see Fig.~\ref{fig:sg_graph}). For instance,
  the boundary of each triangular cell $T_w, w\in \mathcal{S}^n, n=0,1,2,\dots,$ yields a cycle. Denote the
  loops corresponding to $\partial T, \partial T_{(1)}, \partial T_{(2)},\dots$ by $\gamma_0, \gamma_1,
  \gamma_2,\dots,$ respectively. A HM $u$ restricted to each of these $\gamma_i, i=0,1,2,\dots$
  may have a nontrivial winding number $\omega_i$. These winding numbers impose global topological constraints on
  solutions of the boundary value problems for HMs on SG. Below, we introduce the degree vector
  $\bar\omega(u)=(\omega_0, \omega_1, \omega_2,\dots)$, which specifies the homotopy class for $u$.

The fact that the homotopy class of a map $u : G\rightarrow \T$ is
characterized by the degree vector $\bar\omega_\gamma(u)$ can be viewed as a consequence
of the fact that $\T$ is an Eilenberg-MacLane space of type $K(\Z, 1)$. This implies a natural isomorphism
$[K, T]\cong H^1 (G; \Z).$ From this perspective, $\bar\omega(u)$ is algebraically the coordinate
representation of a cohomology class in $H^1(G; \Z)$ with respect to a chosen basis of
$H^1(G; \Z)$. This explains why the topological constraints are additive and Abelian.

In view of this discussion, one expects a far more diverse family of HMs from $G$ to $\T$
than from $\T$ to itself. The HMs illustrated in Figure~\ref{fig:sg1} confirm this expectation:
each plot depicts HMs on the SG corresponding to different homotopy classes. In the follow-up
work \cite{MedMiz2025}, we show that each homotopy class gives rise to a topologically
distinct steady-state solution of \eqref{KM} in the limit $n\to\infty$.

The problem of HMs from the SG to the circle was first studied by Strichartz \cite{Strich02},
who used explicit algebraic calculations based on the harmonic extension algorithm from \cite{Kig01}
to show that a unique HM exists in each homotopy class under suitable boundary conditions.
While this approach provides an algorithmic construction of HMs, it is not suited for the analytical study
of $\T$-valued solutions of \eqref{Lap} and \eqref{Taylor}, which is our primary focus.

Here, we propose a geometric method for constructing HMs from the SG to the circle
  Our method is based on a covering space constructed separately for each degree vector, i.e., for each homotopy
  class. The construction of the covering space in the simplest nontrivial case $\bar\omega(u)=(1,0,0,\dots)$ when
  the winding number over the outer boundary of SG  is equal to $1$ and there are no other loops with nonzero
  winding numbers, is shown in Fig.~\ref{fig:sg2}. In a nutshell,  the idea is to take the infinite product
  $G_\times=G\times \Z$,
  then make cuts in the domains at each level and identify the cut points from two consecutive levels,
  as shown in Figure~\ref{fig:f-cuts}, to form a connected
  covering space $\tilde G$.
  We can now lift the $\T$-valued solution $u$ on $G$ to a real valued solution $\tilde u$ on $\tilde G$ preserving the
  topological constraint given by the winding number.

  After that we restrict to the fundamental domain $G^0=G\times \{0\}$. The topological constraint on the $\T$-valued
  function $u$ now translates into the jump boundary condition for $\tilde u$ at the cut points $z^0_-$ and $z^0_+$
  (see Fig.~\ref{fig:sg2}). The problem is reduced to finding a real-valued harmonic  function $\tilde u$
  on the fundamental domain $G^0$ subject to the corresponding boundary conditions. To solve this problem we adapt
  a well-known \textit{harmonic extension algorithm} (cf.~\cite{Str06}). On self-similar domains like SG, harmonic
  extension provides an efficient algorithm for recursive computation of harmonic functions restricted to
  the set of vertices of $\Gamma_m$
  by starting from the boundary conditions given at the nodes of $\Gamma_0$.

  The combination of the covering space, which takes care of the global topological constraints, and the harmonic
  extension, which effectively solves the harmonic problem locally, provides a powerful technique for computing
  $\T$-valued HMs on fractal domains, which extends naturally to other p.c.f. fractals, far beyond the SG
  considered in \cite{Strich02}. Importantly, the lifting procedure translates the problem for $\T$-valued functions
  into the real-valued
setting, enabling the application of techniques of classical analysis. This forms the basis for extending
$\Gamma$-convergence methods to $\T$-valued solutions in \cite{MedMiz2025}. We also formalize the notion
of homotopy classes in this setting and establish an analogue of the Hopf degree theorem for $\T$-valued
continuous functions on the SG, which was not addressed in \cite{Strich02}.

Our method relies on two key topological properties of the fractal domain at hand. 
The first is compactness, which implies \textit{uniform continuity} of solutions. The latter
property serves as the structural linchpin of our approach. 
It allows the problem to be decomposed into two distinct regimes:

\begin{itemize}
\item \textbf{Global topology:} 
At large scales, the nontrivial topology of $\mathbb{T}$ manifests itself through winding numbers, which are resolved via the construction of the covering space. 
Uniform continuity plays an essential role in the characterization of homotopy classes of homomorphisms: it permits one to restrict functions to the dense subset 
\[
\bigcup_{m=0}^{\infty} \Gamma_m \subset G,
\]
and then extend them to $G \setminus \bigcup_{m=0}^{\infty} \Gamma_m$ by continuity.

\item \textbf{Local linearization:} 
  At the micro scale, uniform continuity guarantees that the variation of the function
  is small enough to allow the solution to be treated locally as real-valued.
\end{itemize}

Another ingredient, whose importance may not be immediately apparent, is that the fractals
under consideration are \textit{finitely ramified}. Roughly speaking, this means that a connected fractal
can be disconnected by removing only finitely many points. This structural property enables us to
introduce cuts that resolve global topological constraints along finitely many cycles (namely, the homology generators),
while leaving the homological properties of the remaining domain unaffected. In this way, we are able
to perform cuts with surgical precision.

Our main result, Theorem~\ref{thm.main}, states that \eqref{Lap} on the SG, with appropriate boundary conditions,
admits a unique solution in each homotopy class. Consequently, as with HMs from $\T$ to itself,
there exist infinitely many $\T$-valued solutions of the Laplace equation on the SG. The hierarchy of
HMs on $G$ is captured by the integer-valued vector, the degree intrinsic to each homotopy class (cf.~\eqref{o-vector}).
Analogously to continuous maps between compact Riemannian manifolds (cf.~\cite{EelSam64}), HMs
from the SG to the circle represent continuous functions in each homotopy class.

Theorem~\ref{thm.main}, the main result of this work, parallels the classical Hodge Theorem (e.g., on manifolds with boundary),
where harmonic objects are defined as minimizers of a Dirichlet energy functional.
In the classical setting, one typically seeks a unique harmonic $1-$form that minimizes the
$L_2$-energy subject to the real cohomology $H^1(G; \R)$ \cite{Jost-Riemann}. In our case, the problem
is to find a map
minimizing energy subject to a prescribed degree vector, which requires the restriction to
integer coefficients $H^1(G; \Z)$ and naturally selects a discrete lattice within the space of
harmonic forms, preserving the uniqueness guaranteed by the convexity of the energy.
\footnote{The authors are grateful to an anonymous reviewer whose insightful 
  observations, including the Eilenberg–MacLane and Hodge perspectives, helped
  to better contextualize the results of this work.}

The paper is organized as follows. In Section~\ref{sec.maps}, we discuss continuous maps from the SG to the circle,
defining the degree of a HM and showing that it determines the homotopy class, as in the case
of maps from a circle to itself. Section~\ref{sec.harmSG} reviews relevant results on real-valued harmonic functions
on the SG. Sections~\ref{sec.cover} and \ref{sec.harm-struct} describe the construction of the covering
space and the associated harmonic structure.
 Section~\ref{sec.HM} presents the construction of HMs from the SG to the circle. For clarity, we
 first consider the simple case where the degree is a single integer, $\bar{\omega}(u)=(\rho_0)$, and then
 extend the construction to HMs of arbitrary degree in Section~\ref{sec.higherorderSG}. Finally,
 Section~\ref{sec.pcf} discusses the generalization of our method to p.c.f. fractals, for which the
 harmonic extension algorithm is also available. We highlight the features of the SG used in the proof of
 Theorem~\ref{thm.main} that do
 not generalize directly and propose strategies to address these challenges.

\section{The main result}\label{sec.maps}
\setcounter{equation}{0}

We begin with the description of 
the structure of the loop space of $G$. Throughout this paper, $G$ denotes SG.
Recall that $T$ stands for a solid closed triangle with vertices $v_1, v_2,$ and  $v_3$
and let $S=\{1,2,3\}$ stand for the alphabet of three symbols. For $n\in\N$
and $w=(w_1,w_2,\cdots w_n)\in S^n$, we define an $n$-cell as
$$
T_w=F_w(T)\doteq F_{w_1}\circ F_{w_2}\circ \dots \circ F_{w_n} (T)
$$
and $\partial T_w$ stands for an oriented boundary of $T_w$. In addition,
$T_\emptyset=T$.

Let $\gamma_0, \gamma_1, \gamma_2, \dots$
stand for the loops from $\mathcal{P}=\left\{\partial T_w, w\in\bigcup_{n=0}^\infty S^n\right\},$
with $w$
  taken in  lexicographical order, i.e.,
  $\gamma_0=\partial T_\emptyset, \gamma_1=\partial T_{1}, \gamma_2=\partial T_{2}, $
  etc.

  Given $f\in C(G,\T)$, the restriction of $f$ to $\gamma\in\mathcal{P}$ after appropriate
  reparametrization $c_\gamma:\gamma\to\T$ yields a map from $\T$ to itself,
  $f_\gamma\doteq f\circ c_\gamma$. 
  Let $\omega(f_\gamma)$ denote the degree of $f_\gamma\in C(\T,\T)$.
  The degree of $f\in C(G,\T)$ is defined as follows
\begin{equation}\label{o-vector}
  \bar\omega (f)=\left( \omega_{\gamma_0} (f), \omega_{\gamma_1} (f), \omega_{\gamma_2} (f), \dots  \right),
  \end{equation}
  where $\omega_{\gamma}(f)\doteq \omega(f_\gamma)$ is the degree  of $f$ restricted to
  a closed simple path $\gamma\in\mathcal{P}$ and reparametrized appropriately.
Below, we show that \eqref{o-vector} determines the homotopy type
of $f$ as a map from $G$ to $\T$. 

Since $f\in C(G,\T)$ is uniformly continuous and the diameter of $T_w$ goes
  to $0$ as the length of $w$ goes to infinity,
$\bar{\omega}(f)$ can have only finitely many nonzero entries.
 Consequently, there exists 
  $N\in\N$ such that the winding numbers for 
  loops  bounding $m$-cells for $m> N$  are all zero:
  $$
  \omega_{\gamma_i}(f)=0, \; i> \frac{3^{N+1}-3}{2}.
  $$
  Therefore, the homotopy class of a HM on $G$ is determined by the degree
  over loops bounding $m$-cells for $m\le N$, i.e., by a finite number of entries
  in \eqref{o-vector}
\begin{equation}\label{degree}
 \bar{\omega}(f)=\left( \omega_{\gamma_0}(f), \omega_{\gamma_1}(f),
    \omega_{\gamma_2}(f), \dots, \omega_{\gamma_{ \frac{3^{N+1}-3}{2} }}(f)\right),
\end{equation}
where we dropped the infinite sequence of zeros at the end.



Following \cite{Strich02}, HMs on SG and other p.c.f. fractals will be constructed as solutions of a certain boundary value
problem.
To set up the problem for $G$,
we need to understand the homotopy classes
of continuous functions on $G$ first. To this end, let
        $$
        \mathcal{P}_n=\left\{ \partial T_w,\; w\in S^n\right\} \quad \mbox{and}\quad
        \mathcal{P}= \bigcup_{n=0}^\infty   \mathcal{P}_n.
        $$
\begin{figure}[h]
		\centering
		\includegraphics[width = .7\textwidth]{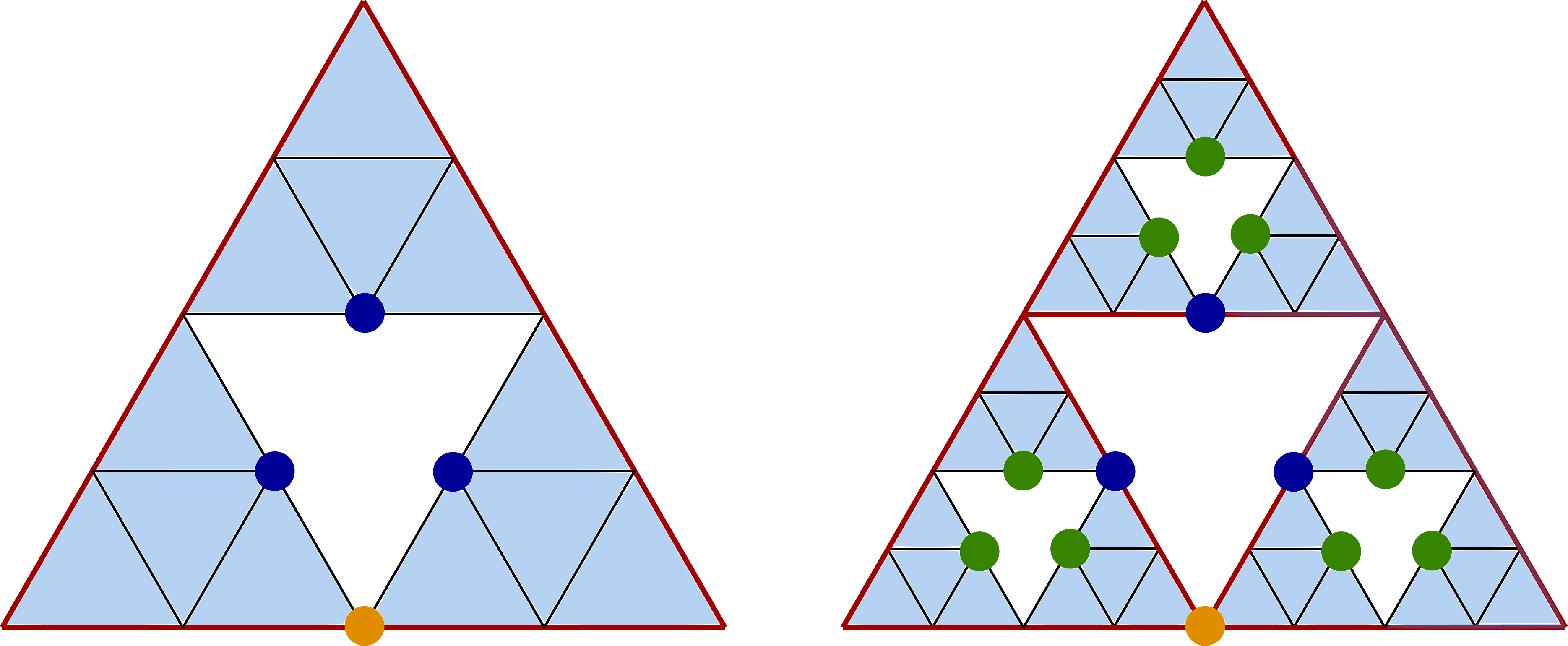}
		\caption{(left)  Reference point $\xi^0$ for the 
                 outer loop $\gamma_0 = \partial T_\emptyset$ is shown in red
                 $\xi^i, \; i\in\{1,2,3\}$ are shown in blue.
                 (right) The next set  of reference points:
                 the reference point $\xi^k, \; k\in \{ 4,5,\dots, 12\}$ are shown in green.
                 They are chosen in the middle of the sides of the triangular of this level that do
                 not intersect 
                 $\gamma_0,\gamma_1,\gamma_2,$ or $\gamma_3$.}
		\label{fig:f-cuts}
	\end{figure}

  Next we want to parametrize cycles $\gamma_0, \gamma_1, \dots$. To this end, for each
  $\gamma_i,\; i=0,1,2,\dots,$
        we choose a reference point $\xi^i$ as follows: $\xi^0$ is chosen in the middle of the base of
        the triangular loop $\gamma_0$. $\xi^i, \; i\in \{1,2,3\}$ is chosen in the middle of the side of
        $\gamma_i\in \cP_1\setminus\cP_0$ (see Figure~\ref{fig:f-cuts}).
        We continue by induction, for $\gamma\in\cP_n,\;n\ge 2$,  the reference point $\xi\in\gamma$
        is chosen in the middle of the side of $\gamma$ that does not belong to $\cP_{n-1}$ (see Figure~\ref{fig:f-cuts}).

For $\gamma_k\in \mathcal{P}$, we choose a \textit{uniform} parametrization
$c_{\gamma_k}:\T\to\gamma$, which starts at $\xi^k$, $c_{\gamma_k}(0)=\xi^k$,
and traces $\gamma_k$ in clockwise
direction with
constant speed $1/|\gamma_k|$. Here, $|\gamma|$ stands for the length of $\gamma$.

For a given $f:~G\rightarrow \T$,  $f_\gamma\doteq f\circ c_\gamma$ is a function
from $\T$ to itself.

There is   a unique continuous  lift $\bar{f}_\gamma:\R\to\R$
such that
\begin{align} \label{lift-1}
  \bar{f}_\gamma(0)&= f(0),\\
  \label{lift-2}
  \bar{f}_\gamma(x)&= f(x\mod 1) + k(x), \quad k(x)\in \Z.
\end{align}
The second condition \eqref{lift-2} can be written as 
\begin{equation}\label{lift-3}
\pi\circ \bar f_\gamma=\bar f_\gamma \circ \pi,
\end{equation}
where $\pi: \,\R\ni x\mapsto x\mod 1.$ 

The degree of $f_\gamma$ is expressed in terms of the lift of 
$f_\gamma$:
$$
\omega(f_\gamma)\doteq \bar{f}_\gamma(1)-\bar{f}_\gamma(0).
$$

\begin{definition}\label{def.w-vector}
  The degree of $f\in C(G,\T)$ is defined by
  \begin{equation}\label{w-vector}
    \bar\omega(f)=\left(\omega_{\gamma_0}(f), \omega_{\gamma_1}(f), 
    \omega_{\gamma_2}(f),\dots \right).
\end{equation}
\end{definition}

\begin{remark}
    Since $f$ is uniformly continuous on $G$, the number of nonzero entries in \eqref{w-vector}
    is finite.
\end{remark}

\begin{definition}\label{def.homotopy}
  Two maps $f,g \in C(G,\T)$ are called homotopic, denoted $f\sim g$, if there exists a continuous mapping
  $F: [0,1]\times G\to \T$ such that
  \begin{equation}\label{F-hom}
    F(0,\cdot)=f\qquad\mbox{and}\quad F(1,\cdot)=g.
    \end{equation}
  \end{definition}

  \begin{theorem}\label{thm.homotopy} Let $f,g \in C(G,\T)$. Then
    $f\sim g$ if and only if $\bar\omega(f)=\bar\omega(g)$.
  \end{theorem}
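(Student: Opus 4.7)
My plan is to prove the two implications separately, with the reverse direction carrying the bulk of the technical work. The forward implication is soft: if $F:[0,1]\times G\to\T$ is a homotopy from $f$ to $g$, then for every $\gamma\in\mathcal{P}$ the map $(t,s)\mapsto F(t,c_\gamma(s))$ provides a homotopy between $f_\gamma$ and $g_\gamma$ as elements of $C(\T,\T)$. Since the degree of a circle map is a classical homotopy invariant, proved via lifting to $\R$ as in \eqref{lift-1}--\eqref{lift-2}, one obtains $\omega_\gamma(f)=\omega_\gamma(g)$ for every $\gamma\in\mathcal{P}$, and hence $\bar\omega(f)=\bar\omega(g)$.

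For the reverse implication, assume $\bar\omega(f)=\bar\omega(g)$ and set $h(x)=f(x)-g(x)\pmod 1$. Additivity of degree in the target (visible at the level of lifts) gives $\omega_\gamma(h)=\omega_\gamma(f)-\omega_\gamma(g)=0$ for every $\gamma\in\mathcal{P}$. It therefore suffices to produce a null-homotopy $H:[0,1]\times G\to\T$ of $h$, since $(t,x)\mapsto g(x)+H(t,x)\pmod 1$ is then a homotopy from $f$ at $t=0$ to $g$ at $t=1$. To construct such an $H$, I would first lift $h$ to a continuous real-valued function $\tilde h:G\to\R$ satisfying $\pi\circ\tilde h=h$, and then set $H(t,x)=\pi\bigl((1-t)\tilde h(x)\bigr)$; this straight-line contraction is a null-homotopy because $\R$ is convex.

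The lift $\tilde h$ is produced by the classical path-lifting recipe: fix a base point $x_0\in G$ and a value $\tilde y_0\in\pi^{-1}(h(x_0))$; for each $x\in G$, choose a path $\sigma:[0,1]\to G$ from $x_0$ to $x$ (the SG is path connected), lift $h\circ\sigma$ uniquely to a real-valued path starting at $\tilde y_0$, and define $\tilde h(x)$ to be its endpoint. Continuity of $\tilde h$ follows from uniform continuity of $h$ together with the fact that nearby points of $G$ can be joined by short paths whose $h$-images lift to short paths in $\R$.

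The technical heart of the argument, and what I expect to be the main obstacle, is showing that $\tilde h(x)$ is independent of the chosen path $\sigma$---equivalently, that $\omega_\ell(h)=0$ for every closed loop $\ell$ in $G$, not merely for the triangular loops in $\mathcal{P}$. My strategy is to approximate $\ell$ by a loop $\ell_n$ supported on the edge set of the level-$n$ graph approximation $G_n$: partitioning $\T$ into arcs each mapped by $\ell$ into a single level-$n$ cell (possible for $n$ large by uniform continuity) and replacing each arc with an edge-path in $G_n$ between the same endpoints produces such an $\ell_n$ with $\|h\circ\ell-h\circ\ell_n\|_\infty<1/2$, whence $\omega_\ell(h)=\omega_{\ell_n}(h)$ by the standard stability of degree under small uniform perturbations. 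Since $G_n$ is a finite planar graph whose bounded faces are exactly the level-$n$ triangular cells, $\ell_n$ is homologous in $G_n$ to an integer combination of face-boundary loops $\partial T_w$ with $|w|=n$; as the winding number of a circle-valued continuous map factors through $H_1$, and each $\partial T_w\in\mathcal{P}$ contributes $0$ by hypothesis, we conclude $\omega_{\ell_n}(h)=0$, hence $\omega_\ell(h)=0$, completing the proof.
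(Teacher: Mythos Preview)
Your forward implication matches the paper exactly. For the converse the paper takes a more direct route: it simply declares $F(t,x)=tf(x)+(1-t)g(x)\pmod 1$, verifies via the lifts $\bar f_\gamma,\bar g_\gamma$ that this closes up continuously around each $\gamma\in\mathcal{P}$ (precisely because $\omega(f_\gamma)=\omega(g_\gamma)$ makes the affine interpolation periodic), and then extends by uniform continuity from the dense subset $\bigcup_n\mathcal{P}_n\subset G$ to all of $G$. Your approach---reduce to $h=f-g$, show $h$ admits a global lift $\tilde h:G\to\R$, and straight-line contract---is the standard covering-space argument and is genuinely different: it trades the paper's density-extension step for an explicit lifting lemma. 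Both end up with a linear homotopy, but yours makes the role of $H_1$ transparent and is closer in spirit to what one would do for a general p.c.f.\ set.

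There is, however, a concrete gap in your cycle-decomposition step. The bounded faces of the planar graph $\Gamma_n$ are \emph{not} only the $n$-cells $T_w$ with $|w|=n$: they also include the ``holes'' (the inverted middle triangles removed at levels $1,\dots,n$), of which there are $(3^n-1)/2$. Hence $\{[\partial T_w]:|w|=n\}$ spans only a $3^n$-dimensional subspace of $H_1(\Gamma_n;\Z)$, which has dimension $(3^{n+1}-1)/2$, and a general graph loop $\ell_n$ need not lie in that span; already $\ell_n=\partial T_\emptyset$ does not. The repair is immediate: use instead the full family $\{[\partial T_w]:|w|\le n\}$, which has exactly $(3^{n+1}-1)/2$ elements and \emph{is} a basis of $H_1(\Gamma_n;\Z)$ (inductively, $[\partial T_w]=\sum_{i\in S}[\partial T_{wi}]+[\text{hole}_w]$ lets you swap each ancestor boundary for the corresponding hole). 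Every $\partial T_w$ with $|w|\le n$ lies in $\mathcal{P}$, so each contributes zero winding by hypothesis, and your conclusion $\omega_{\ell_n}(h)=0$ then follows as written.
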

  \begin{proof} See~Appendix~A.
\end{proof}

We now turn to the Laplace equation
\begin{equation}\label{Laplace}
\Delta f(x)=0, \qquad x\in G\setminus V_0,
\end{equation}
where $f\in C(G,\T)$.

In analogy to the formulation of the Dirichlet problem for real-valued  
functions on $G$ \cite{Kig01}, 
we include the values of $f$ on $V_0$  in the formulation
of the boundary value problem for HMs. In fact, we need a little more. Note that eventually 
we will need to recover $f_{\gamma_0}$. Thus, instead of the values
of $f|_{V_0}$ we incorporate the values of $\bar f_{\gamma_0}$, the lift of $f_{\gamma_0}$,
to our formulation of the boundary value problem for $f: G\to \T$. Specifically, we 
use  
\begin{align}
\nonumber
\partial_{v_1v_2} f &\doteq \bar f_{\gamma_0} (1/3)-\bar f_{\gamma_0} (0)= \delta_{1},\\
\label{bc-hm}
\partial_{v_2v_3} f &\doteq \bar f_{\gamma_0} (2/3)- \bar f_{\gamma_0} (1/3)= \delta_{2},\\
\nonumber
\partial_{v_3v_1} f &\doteq \bar f_{\gamma_0} (1)-\bar f_{\gamma_0} (2/3)= \rho_0- \delta_{1} - \delta_{2}.
\end{align}
Here, $\rho_0\doteq \omega_{\gamma_0}(f)$.
\begin{remark}
	\label{rem:bc}
	To illustrate the importance of defining the boundary conditions as real-valued lifts, fix $\rho_0=0$ and first consider the trivial HM, $f_1\equiv 0$. 
 Next, let $f_2$ be the HM obtained by taking $\delta_{1} = 1 = -\delta_{2}$ in \eqref{bc-hm}; the result is shown in Figure \ref{fig:sg1}{\bf a}. As $\T$-valued maps, we see that $f_1$ and $f_2$ agree on $V_0$ and have the same degree, but $f_1\neq f_2$.
\end{remark}

In addition to \eqref{bc-hm}, in analogy to the case of HMs from $\T$ to $\T$,
we will need to specify the homotopy class of $f$:
\begin{equation}\label{homotopy-c}
\bar\omega(f)=\left(\rho_0, \rho_1, \dots, \rho_n,0,0, \dots\right) \in \Z^\ast\doteq \bigcup_{l=1}^\infty \Z^l.
\end{equation}

  For brevity, we drop the trailing zeros and write
$\bar\omega(f)=(\rho_0,\rho_1,\dots,\rho_n)$ instead of the longer expression in \eqref{homotopy-c},
assuming that $\rho_n$ is the last nonzero entry. If $\bar\omega(f)=(0,0,\dots)$, we write
$\bar\omega(f)=(0)$.

We can now state our main result for the SG.

\begin{theorem}\label{thm.main}
There is a unique HM $f: G\to\T$ satisfying \eqref{bc-hm} and \eqref{homotopy-c}.
\end{theorem}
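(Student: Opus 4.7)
My plan is to follow the blueprint laid out in the introduction: convert the question about $\T$-valued harmonic maps on $G$ into a question about $\R$-valued harmonic functions on a carefully constructed covering space, and then transfer known existence and uniqueness results for real-valued harmonic functions on SG back down to the original problem.

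The construction proceeds in four steps. First, given the prescribed degree vector $\bar\omega = (\rho_0,\rho_1,\dots,\rho_n,0,0,\dots)$, I build a covering space $p: \tilde G \to G$ whose deck transformation group is generated by integer shifts associated with precisely those loops $\gamma_i$ for which $\rho_i\neq 0$. Concretely, I unfold $G$ by cutting along each such loop so that traversing $\gamma_i$ in the base lifts to a translation by $\rho_i$ in $\R$, and I pick a fundamental domain $D\subset \tilde G$ homeomorphic to $G$. Second, I observe that because the winding data of any $f\in C(G,\T)$ with $\bar\omega(f)=\bar\omega$ is exactly what the covering was designed to kill, the standard lifting criterion (using $\R\to\T$ as the universal cover) produces a continuous lift $\tilde f: \tilde G\to \R$ with $\pi\circ\tilde f = f\circ p$. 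Third, the boundary data in \eqref{bc-hm} translate into prescribed real values of $\tilde f$ on the preimage of $V_0$ in the fundamental domain; I then apply the harmonic extension algorithm of Section~\ref{sec.harmSG} on $\tilde G$, using its self-similar structure inherited from $G$, to produce a real-valued harmonic $\tilde f$. Fourth, I set $f = \pi\circ\tilde f|_D$ and check, using the compatibility under deck transformations by integer shifts and the fact that $\pi$ is a local isometry, that $f$ descends to a well-defined continuous function on $G$ that is harmonic off $V_0$, satisfies \eqref{bc-hm}, and has degree $\bar\omega$.

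For uniqueness, suppose $f_1,f_2\in C(G,\T)$ are two HMs satisfying \eqref{bc-hm} and \eqref{homotopy-c}. Since $\bar\omega(f_1)=\bar\omega(f_2)$, both lift to real-valued functions $\tilde f_1,\tilde f_2:\tilde G\to \R$ which, after a harmless additive normalization fixed by the common value of $\bar f_{\gamma_0}(0)$ coming from \eqref{bc-hm}, agree on $p^{-1}(V_0)\cap D$. Both are harmonic on $\tilde G\setminus p^{-1}(V_0)$ with the same Dirichlet data on the boundary of the fundamental domain, so by uniqueness of real-valued harmonic functions on SG with prescribed boundary values (and propagation by deck transformations to all of $\tilde G$), $\tilde f_1\equiv \tilde f_2$, which upon projecting gives $f_1=f_2$.

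The main obstacle is step one: constructing $\tilde G$ so that it simultaneously (i) admits deck transformations realizing each prescribed $\rho_i$, (ii) retains enough of the self-similar cell structure of $G$ for the harmonic extension algorithm to make sense, and (iii) has a well-behaved fundamental domain that meets $V_0$ cleanly. Because the loops $\gamma_i$ sit at all levels of the cell hierarchy and can be nested inside one another, the covering must be built recursively, and one has to verify that the harmonic structure on $G$ pulls back to a consistent harmonic structure on $\tilde G$. Once this geometric and combinatorial construction is in hand, the rest of the argument reduces to the standard real-valued harmonic extension theory on SG and the Hopf-type result of Theorem~\ref{thm.homotopy}.
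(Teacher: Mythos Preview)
Your proposal is correct and follows essentially the same approach as the paper. The paper makes the covering-space construction concrete by cutting each sheet at one junction vertex per nonzero $\rho_i$ and then minimizing the energy on the fundamental domain subject to jump conditions; the step you correctly flag as the main obstacle---verifying that the resulting function is genuinely harmonic on all of $\tilde G$, in particular at the cut/gluing points---is resolved by the symmetry argument of Theorem~\ref{lem.ext_harmonic_on_G}.
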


\section{Harmonic structure on SG}\label{sec.harmSG}
\setcounter{equation}{0}
Before discussing $\T$-valued HMs on $G$, it is instructive to
review the definition and basic properties of real-valued harmonic
functions on  $G$ (cf.~\cite{Kig01}).

Recall the definition of $\Gamma_m, m\in \N$,  graphs approximating $G$ (see Fig.~\ref{fig:sg_graph}).
We will distinguish the
\textit{boundary} and \textit{interior} nodes of $\Gamma_m$. The former
are given by the vertices of $\Gamma_0$, i.e., $V_0=\{v_1, v_2, v_3\}$, and the latter are
the remaining nodes $V_m\setminus V_0$.  Let $L(X,Y)$ stand for the space of functions from $X$ to $Y$.

\begin{definition}
  $f \in L(V_m,\R)$ is called $\Gamma_m$-harmonic if it satisfies the discrete
	Laplace equation at every interior node of $\Gamma_m$:
	\begin{equation}\label{d-harm}
	\Delta_mf\,(x)\doteq\sum_{x\sim_m y} \left(f(y) -f(x)\right)=0\quad \forall y\in V_m\setminus V_0.
              \end{equation}
              \end{definition}

             Harmonic functions on the SG and other fractals are defined with the help of the 
             Dirichlet (energy) form $\mathcal{E}(f)$, which in turn is defined using properly 
             scaled energy forms on graphs approximating corresponding fractals (cf.~\cite{Kig01}).
The Dirichlet forms on the sequence of graphs $(\Gamma_m)$ are defined as follows:
	\begin{equation}\label{e-form}
          \mathcal{E}_{m}(f)=\left(\frac{5}{3}\right)^m
          \sum_{xy\in E(\Gamma_m)} \left(f(x)-f(y)\right)^2,\quad f\in L(V_m,\R),
	\end{equation}
where $E(\Gamma_m)$ denotes the set of edges of $\Gamma_m$.        

The sequence of Dirichlet forms $(\mathcal{E}_m)$ has the following 
properties.
\begin{enumerate}
    \item A $\Gamma_m$-harmonic $f_m^\ast\in L(V_m,\R)$ minimizes $\mathcal{E}_m$ 
    over all functions subject
    to the same boundary conditions
    \begin{equation}\label{variational-harmonic}
    \mathcal{E}_m(f_m^\ast)=\min\left\{ \mathcal{E}_m(f):\; f\in L(V_m,\R), \; f|_{V_0}=f_m^\ast|_{V_0}\right\}.
    \end{equation}
\item The minimum of the energy form over all extensions of 
$f\in L(V_{m-1},\R)$ to $V_m$ is equal to $\mathcal{E}_{m-1}(f)$:
\begin{equation}\label{variational-extension}
 \min\left\{ \mathcal{E}_m(\tilde f):\; \tilde f\in L(V_m,\R), 
 \tilde f|_{V_{m-1}}=f\in L(V_m,\R)\right\} = \mathcal{E}_{m-1}(f).
\end{equation}
\end{enumerate}

The first property follows from the Euler-Lagrange equation for $\mathcal{E}$.
This property does not depend on the scaling
coefficient $(5/3)^m$ in \eqref{e-form}. The second property, on the other hand, holds 
due to the choice of the scaling constant $(5/3)^m$.
The sequence of 
$(\mathcal{E}_m)$ is said to equip $G$ with a \textit{harmonic structure}.

By \eqref{variational-extension}, for  every $f\in C(G,\R)$
$\left( \mathcal{E}_m(f|_{V_m})\right)$ forms a nondecreasing
sequence. Thus,
$$
\cE(f)=\lim_{m\to\infty} \cE_m(f|_{V_m})
$$
is well-defined. The domain of the Laplacian on $G$ is defined by
$$
\operatorname{dom}(\cE)=\left\{ f\in C(G,\R):\; \cE(f)<\infty \right\}.
$$

\begin{definition}\label{def.R-harm}
A function $f\in \operatorname{dom}(\cE)$ is called harmonic if it 
minimizes $\cE(f)$ over all continuous functions on $G$ subject to given boundary conditions 
on $V_0$.
\end{definition}

      Property \eqref{variational-extension} yields
      a recursive algorithm for computing the values of a harmonic function on
      the union of sets of vertices of $(\Gamma_m)$,
      $V_\ast=\bigcup_{m=0}^\infty V_m$, a dense subset
      of $G$. For $m=0$, the values on $V_0$ are prescribed:
      $$
      f|_{V_0}=\phi.
      $$
      Given $f|_{V_{m-1}},\; m\ge 1,$ the values on $V_m/V_{m-1}$ are computed using the
      following $\frac{1}{5}-\frac{2}{5}$ rule, which we state for an arbitrary fixed $(m-1)$-cell
      $T_w,\; w\in S^{m-1}$: Suppose the values of $f$ at $a, b, c$, the nodes of $T_w$, are
 known. Then the values of $f$ at $x,y,z,$ the nodes at the next level of discretization  are 
 computed as follows 
\begin{equation}\label{classical-extension}
\begin{pmatrix} f_x \\ f_y\\ f_x 
\end{pmatrix}
=
\begin{pmatrix}
    \frac{2}{5} &  \frac{2}{5} &  \frac{1}{5}\\
     \frac{1}{5} &  \frac{2}{5} &  \frac{2}{5}\\
      \frac{2}{5} &  \frac{1}{5} &  \frac{2}{5}
\end{pmatrix}
\begin{pmatrix} f_a \\ f_b\\ f_c
  \end{pmatrix},
\end{equation}
where $f_v$ stands for the value of $f$ at $v\in V_m$
(Figure~\ref{fig:harm_ext}).
\begin{figure}[h]
	\centering
	\includegraphics[width  =.3\textwidth]{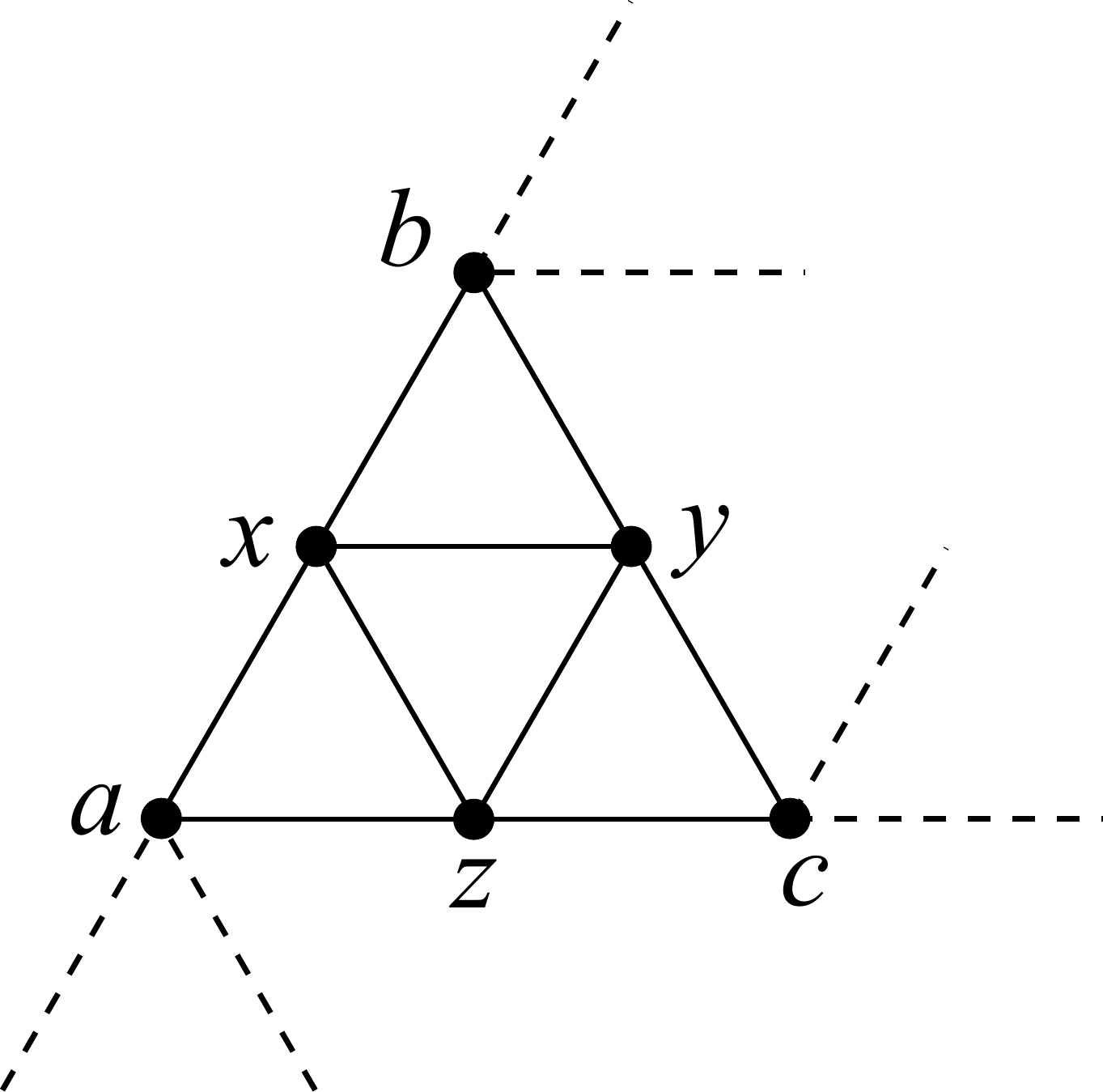}
	\caption{Harmonic extension algorithm for harmonic functions, see \eqref{classical-extension}.}
	\label{fig:harm_ext}
\end{figure}

      The  recursive algorithm of computing the values of a harmonic function $f$ on $V_{m+1}$
      using its values on $V_m$ via 
      \eqref{classical-extension} 
      is called \textit{harmonic extension}. 
      
      Using \eqref{variational-extension}, one can show that at each step of the harmonic extension:
      \begin{equation}\label{min-E}
      \cE_m(f|_{V_m})=\cE_0(f|_{V_0})=\min \left\{ f\in L(V_m,\R):\; f|_{V_0}=\phi\right\}.
      \end{equation}
      Thus, each of  $f|_{V_m}$ is a $\Gamma_m$-harmonic function. Further, harmonic extension
      results in a uniformly continuous function on $V_*$. Thus, the values of $f$ on $G\setminus V_*$ can
      be obtained from $V_*$ to SG by continuity (see \cite[\S 11.2]{Saenz-HarmAnal} for more details). 
      The continuous function obtained in this way minimizes $\cE$ over all continuous functions 
      on SG with the same boundary conditions. This yields a harmonic function $f$ on SG.

      By construction, $f\in \operatorname{dom}(\cE)$ and $f|_{V_m}$ is also $\Gamma_m$-harmonic
      for every $m\in \N$. This property may be used as an alternative definition of a harmonic function
      on SG. We refer the interested reader to \cite{Saenz-HarmAnal, Str06}
      for more details on harmonic extension.
      
%
%


  

\section{The covering space}\label{sec.cover}
\setcounter{equation}{0}

In this section, we explain the construction of covering spaces for $G$, which are the key
tool in the proof of Theorem~\ref{thm.main}. The covering space  is constructed separately for each 
degree vector
$\bar\omega(f)$.   We begin with the simplest nontrivial case of 
$
\bar\omega (f)=(\rho_0),
$
where we assume that $\rho_0\neq 0$, because otherwise
$f$ can be computed using the standard harmonic extension algorithm for real valued
functions (cf.~\eqref{classical-extension}).

Before we turn to the construction of the covering space, we need
 to explain the  coding of the nodes of the graphs
approximating $G$. Every node from $V_n$ is a vertex of the corresponding triangle
$T_w=F_w(T),\; w=(w_1,w_2,\dots, w_n) \in S^n$ and can be represented
as
$$
\bigcap_{k=1}^\infty
F_{ w\underbrace{ \scriptscriptstyle iii\dots i}_\text{\normalfont $k$ times}}\,(T).
$$
Thus, for each node in $V_n$ we define an itinerary $w\bar{i}$,
where $\bar{i}$ stands for the infinite sequence of $i$'s: $iii\dots,$ $i\in S$.
Note that for $v\in V_n\setminus V_0$ there are two possible itineraries, e.g.,
$v_{1\bar{2}}$ and $v_{2\bar{1}}$
correspond to the same node from $V_2$.
\begin{figure}[h]
	\centering
	\includegraphics[width = .6\textwidth]{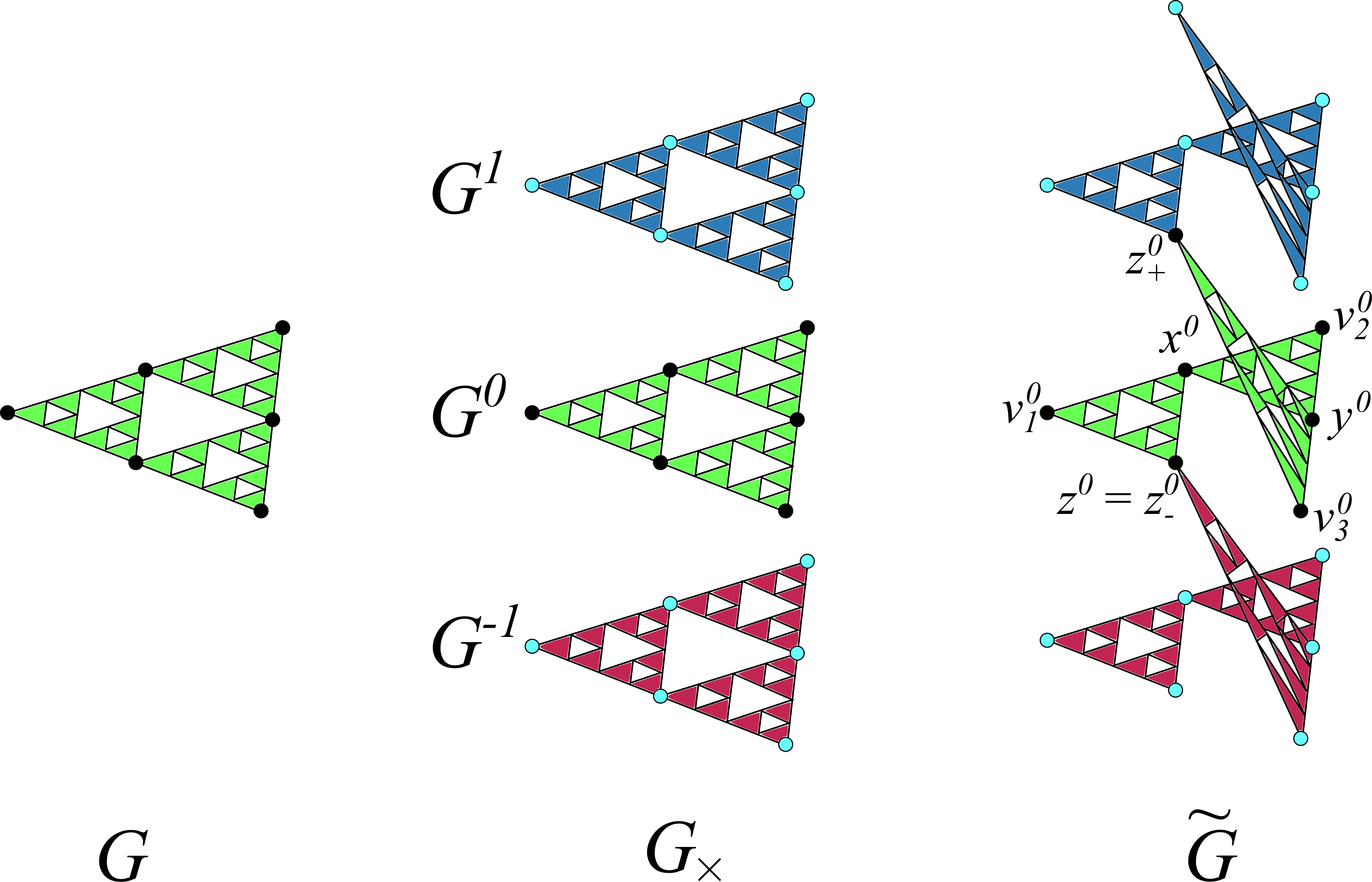}
	\caption{Construction of the covering space of $G$ corresponding to the degree $\bar{\omega}(f) = (1)$.}
	\label{fig:sg2}
\end{figure}
\begin{figure}[h]
	\centering
	\includegraphics[width = .4\textwidth]{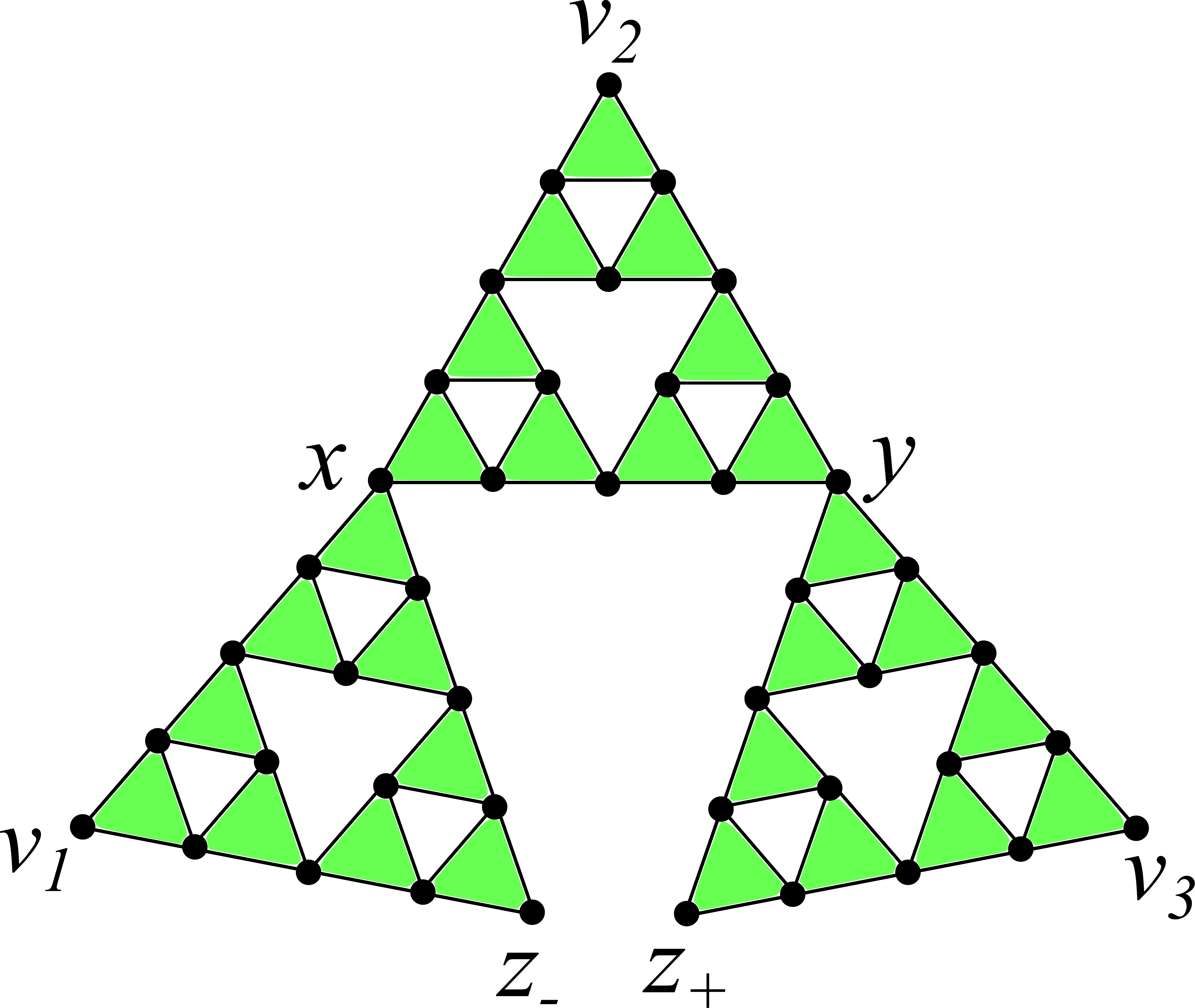}
	\caption{Approximating graphs $\Gamma_m^k$ for each sheet $G^k$ of the covering space $\tilde{G}$. Since our construction results in two copies of the vertex $z$, the resultant graphs are distinct from Figure \ref{fig:sg_graph}. Superscripts $k$ are suppressed for simplicity.}
	\label{fig:sg_graph_cut}
      \end{figure}
The construction of the covering space $\tilde G$ involves the 
following steps.
\begin{enumerate}
    \item
First, let
\begin{align}
    G_\times & \doteq G\times \Z, \label{G-stack}\\
    G^k & \doteq G\times \{k\} \subset G_\times, \quad k\in\Z, \label{k-sheet}\\
   G^k_i &\doteq F_i(G^k),\quad i\in S,\quad k\in \Z\\
    G^k_1\cap G^k_2  =\{x^k\}, &\quad  G_2^k\cap G_3^k =\{y^k\}, \quad  G_3^k\cap 
    G_1^k =\{z^k\}, \quad k\in\Z, \label{xyz}
\end{align}
(see Figure~\ref{fig:sg2}).
\item Then cut each $G^k$ at $z^k$, i.e., we replace $z^k$ with two distinct copies 
\begin{equation}\label{cut-z}
    z^k_- = v^k_{1\bar{3}}\quad\mbox{and}\quad
    z^k_+ = v^k_{3\bar{1}}.
\end{equation}
\item 
Identify
\begin{equation}\label{identify}
  z^k_+ =v_{3\bar{1}}^k \simeq v_{1\bar{3}}^{k+\rho_0} = z^{k+\rho_0}_-, \qquad k\in \Z.
\end{equation}
\item The covering space $\tilde G\doteq G_\times/^\simeq$, is the topological space obtained 
after identification \eqref{identify}.
  The copies of $G$, belonging to different levels $k\in\Z$, compose the sheets of $\tilde G$.
  We keep denoting them by $G^k$.
  Each sheet contains both copies of $z^k:$ $z_-^k$ and $z^k_+.$
  $G^0$ is called the \textit{fundamental domain}.
\item For $k\in\Z$,  we introduce a family  of graphs $\Gamma_m^k$ approximating $G^k$ (see Figure~\ref{fig:sg_graph_cut}). $\Gamma_m^k$ are constructed in the same way as graphs 
$\Gamma_m$  in the previous section (see Figure~\ref{fig:sg_graph}), with the only 
distinction that $z^k$ replaced with the 
the two copies $z_k^-$ and $z_k^+$ (see Figure~\ref{fig:sg_graph_cut}). 
  By identifying $V_m^k\ni z_+^k\simeq z_-^{k+\rho_0}\in V_m^{k+\rho_0}$, 
  we obtain the discretization
  of the covering space $\tilde G$, $\tilde \Gamma_m, \; m\in\N$. The set of nodes
  of $\tilde\Gamma_m$ is denoted by $\tilde V_m$, $\tilde V_\ast=\bigcup_{m=0}^\infty \tilde V_m$,
  and $\tilde V_\ast^k =\bigcup_{m=1}^\infty V_m^k.$
  $\tilde V_\ast$ is dense in $\tilde G.$
\end{enumerate}

In conclusion, we emphasize that the covering space is
constructed separately for each winding vector. For instance, when
$\rho_0=2$,  $G^k$ connects to $G^{k+2}$
(see Figure~\ref{fig:sg_graph_cut}) instead of $G^{k+1}$ for $\rho_0=1$ as shown in  Figure~\ref{fig:sg2}. The dependence of the covering space on 
the degree vector is the key feature of our approach.

      \begin{figure}[h]
	\centering
	\includegraphics[width = .15\textwidth]{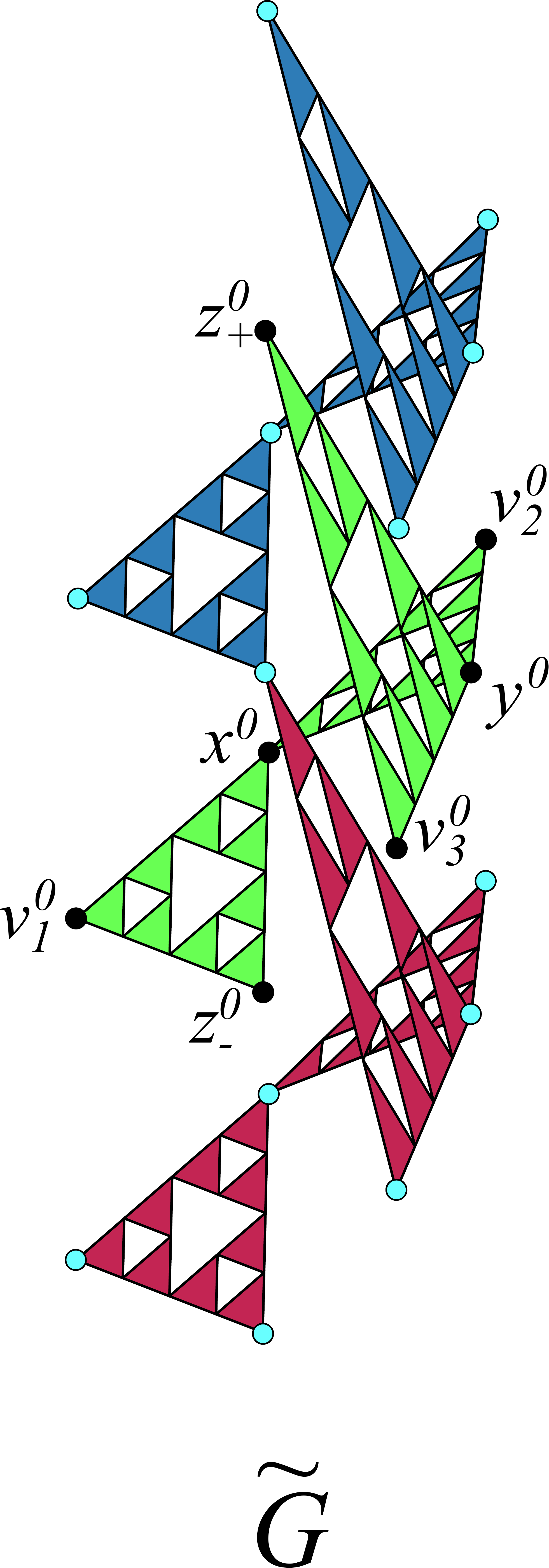}
	\caption{The covering space corresponding to $\rho_0=2$ .}
	\label{fig:covering-2}
\end{figure}

\section{Harmonic structure on $\tilde G$}\label{sec.harm-struct}
\setcounter{equation}{0}


 Harmonic functions on $\tilde G$ will be constructed via minimization of  the energy form to be defined next.
 To this end, let $\delta_1, \delta_2\in\R,$ and $\rho_0\in\Z$ be arbitrary fixed numbers and define
  \begin{equation}\label{def-H}
  H^k_m\doteq \{ f\in L(V_m^k, \R):\quad f(v_1^k)=k,\; f(v_2^k) = \delta_{1}+k,\; 
  f(v_3^k)= \delta_2+\delta_1+k,\;
  f(z_+^k) = f(z_-^k)+\rho_0\}
\end{equation}
for $m\in\N$ and $k\in\Z$ (see Figure~\ref{fig:sg_fund_domain}).

The energy form  on $\Gamma^k_m$ is defined as follows
  \begin{equation}\label{EGamma}
      E[\Gamma_m^k](f) = \left(\frac{5}{3}\right)^m\sum_{\xi\eta\in E(\Gamma_m^k)} \left(f(\eta)-f(\xi)\right)^2,
      \qquad f\in L(V_m^k,\R), m\in\N.
  \end{equation}

The energy form $E[\Gamma^0_m]$ inherits the key properties of the 
Dirichlet form $\cE_m$ (see \eqref{variational-harmonic}, \eqref{variational-extension}):
\begin{enumerate}
    \item A $\Gamma_m^0$-harmonic $f_m^0\in H_m^0$ minimizes $E[\Gamma^0_m]$ over $H_m^0$: 
    \begin{equation}\label{E-harmonic}
     E[\Gamma^0_m](f^0_m)=
    \min\left\{ E[\Gamma^0_m](f):\; f\in H_m^0 \right\}.
    \end{equation}
\item The minimum of the energy form $E[\Gamma^0_m]$ over all extensions of 
$f\in H_{m-1}^0$ to $H_m^0$ is equal to $E[\Gamma^0_{m-1}](f)$:
\begin{equation}\label{E-extension}
 \min\left\{ E[\Gamma^0_m](\tilde f):\; \tilde f\in H_m^0,\; 
 \tilde f|_{V_{m-1}}=f\in H_{m-1}^0 \right\} = E[\Gamma^0_{m-1}](f).
\end{equation}
\end{enumerate}

\begin{figure}[h]
	\centering
	\includegraphics[width = .3\textwidth]{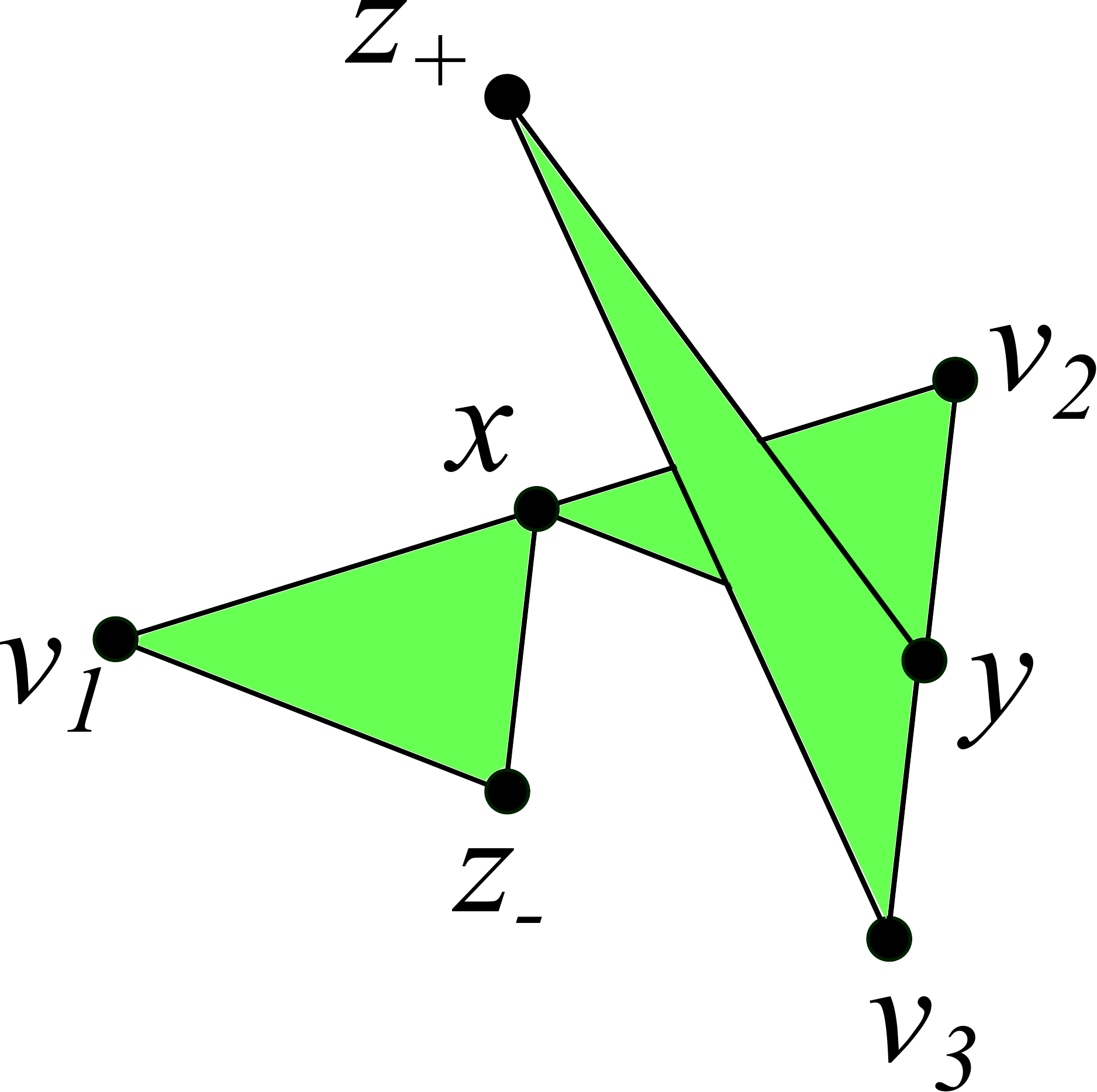}
	\caption{The boundary conditions for the minimization problem \eqref{variational}:
 the values of the function are imposed at  $v_1$, $v_2$, $v_3$ and the  jump condition - at $z_+$. }
	\label{fig:sg_fund_domain}
\end{figure}

For every $m\in\N$, consider the minimization problem 
\begin{equation}\label{variational}
    E[\Gamma_m^0](f)\longrightarrow \min_{f\in H_m^0}.
\end{equation}

\begin{lemma}\label{lem.minimizer}
    The variational problem \eqref{variational} has a unique solution
    $f^0_m \in H^0_m$:
    $$
    E[\Gamma_m^0](f^0_m)=\min_{f\in H_m^0} E[\Gamma_m^0](f).
    $$
    Moreover,
    \begin{enumerate}
    \item $f_m^0$ \; is $\Gamma^0_m$-harmonic,
        \item $f^0_{m+1}|_{V_m} = f_m^0$.
    \end{enumerate}
\end{lemma}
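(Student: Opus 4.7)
I would organize the proof around three assertions: existence/uniqueness of the minimizer, its $\Gamma^0_m$-harmonicity, and the compatibility $f^0_{m+1}|_{V_m}=f^0_m$.

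First, for existence and uniqueness, I would observe that $H_m^0$ is a (non-empty) finite-dimensional affine subspace of $L(V_m^0,\R)$, carved out of a Euclidean space by the four linear constraints in \eqref{def-H}. The functional $E[\Gamma_m^0]$ is a positive semi-definite quadratic form, so on $H_m^0$ it is the sum of a constant, a linear term, and a positive semi-definite quadratic form in the remaining free coordinates. Minimization then reduces to solving a linear system, and it suffices to verify that the quadratic part is strictly positive definite on the tangent space $T=\{g\in L(V_m^0,\R):g(v_1)=g(v_2)=g(v_3)=0,\; g(z_+)=g(z_-)\}$. If $g\in T$ and $E[\Gamma_m^0](g)=0$, then $g$ is constant along every edge of $\Gamma_m^0$; the condition $g(z_+)=g(z_-)$ reidentifies the cut, so $g$ is constant on the connected graph $\Gamma_m$ and vanishes because $g(v_1)=0$. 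Hence the quadratic form is strictly convex on $H_m^0$, giving coercivity, existence, and uniqueness.

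Next I would derive $\Gamma_m^0$-harmonicity from first-order conditions. The effective free variables are $f(\xi)$ for $\xi\in V_m^0\setminus V_0$, subject only to the jump relation $f(z_+)-f(z_-)=\rho_0$. Introduce a Lagrange multiplier $\lambda$ for this single equality constraint and set the gradient of $E[\Gamma_m^0](f)-\lambda(f(z_+)-f(z_-)-\rho_0)$ to zero. At every interior node $\xi\notin V_0\cup\{z_+,z_-\}$, this reproduces the discrete Laplace equation $\sum_{\eta\sim_m\xi}(f(\eta)-f(\xi))=0$, exactly the defining relation \eqref{d-harm}. At $z_+$ and $z_-$ the Lagrange multiplier appears, but these two nodes are boundary nodes of $\Gamma_m^0$ in the same sense as $v_1,v_2,v_3$, so the harmonicity condition is only required in the interior, and we conclude $\Delta_m f_m^0\equiv 0$ on $V_m^0\setminus (V_0\cup\{z_+,z_-\})$.

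For the compatibility assertion, the argument is a direct two-step minimization using property \eqref{E-extension}. Any $\tilde f\in H_{m+1}^0$ restricts to some $f=\tilde f|_{V_m^0}\in H_m^0$, because the constraints in \eqref{def-H} only involve nodes already present in $V_m^0$. Therefore
\begin{equation*}
\min_{\tilde f\in H_{m+1}^0} E[\Gamma_{m+1}^0](\tilde f)
=\min_{f\in H_m^0}\;\min_{\tilde f|_{V_m^0}=f} E[\Gamma_{m+1}^0](\tilde f)
=\min_{f\in H_m^0} E[\Gamma_m^0](f),
\end{equation*}
where the inner equality is \eqref{E-extension}. Thus $f_{m+1}^0|_{V_m^0}$ minimizes $E[\Gamma_m^0]$ over $H_m^0$, and uniqueness from the first step forces $f_{m+1}^0|_{V_m^0}=f_m^0$.

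The main subtlety I anticipate is verifying that the cut graph $\Gamma_m^0$ really does retain the two structural properties \eqref{E-harmonic} and \eqref{E-extension} needed above; these are asserted in the text just prior to the lemma, but relying on them implicitly requires that the classical $\tfrac{1}{5}$--$\tfrac{2}{5}$ extension rule still produces a valid $H_m^0$-function when one of the boundary points has been split in two. Once this is granted (which is plausible since the rule is local to each triangular cell and the cut at $z$ does not sit inside any single cell at levels $m\ge 1$), the three steps above complete the proof.
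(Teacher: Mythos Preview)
Your proof is correct and follows a more abstract variational route than the paper's. The paper argues existence by compactness of sublevel sets, then proceeds \emph{constructively}: invoking \eqref{E-extension} it reduces the level-$m$ problem to level~$1$, solves explicitly for $(f_x,f_y,f_z)$ to obtain the modified extension formula \eqref{Hext}, and then runs the classical $\tfrac15$--$\tfrac25$ rule inside each of the three sub-cells $v_1xz$, $xv_2y$, $z_+yv_3$; uniqueness is read off from this explicit recipe, and properties~1 and~2 are simply cited as consequences of \eqref{E-harmonic}--\eqref{E-extension}. You instead obtain existence and uniqueness in one stroke via strict convexity of $E[\Gamma_m^0]$ on the affine space $H_m^0$ (your connectedness argument for the tangent space is a clean point the paper omits), derive $\Gamma_m^0$-harmonicity from the Euler--Lagrange equations with a Lagrange multiplier handling the jump constraint, and prove compatibility by the two-level minimization identity. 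The paper's approach buys the explicit formula \eqref{Hext}, which is used for the numerical illustrations and makes the link to Strichartz's original computation transparent; your approach is computation-free and would carry over verbatim to the general p.c.f.\ setting of Section~\ref{sec.pcf}, where no closed-form analogue of \eqref{Hext} exists. Your closing caveat about \eqref{E-extension} is apt: the paper also relies on it without proof, and your parenthetical remark that the cut point $z$ never lies in the interior of a single $1$-cell is exactly why the cellwise $\tfrac15$--$\tfrac25$ extension still lands in $H_m^0$.
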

\begin{proof}
Let $m\in\N$ be arbitrary but fixed. 

Since $E[\Gamma_m^0](f)$ is bounded from below and the sublevel sets
$\{ f:\;E[\Gamma_m^0](f)\le c\}$ are compact, we conclude that \eqref{variational} has 
at least one solution.

Let $f^\ast$ be a minimizer of $E[\Gamma_m^0](f)$. By \eqref{E-extension}, 
$E[\Gamma_m^0](f^\ast)=E[\Gamma_1^0](f^\ast|_{V_1^0})$.

Thus, we first find the values of $f^\ast$ at points $x, y,$ and $z:=z_-$  such that
$$
E[\Gamma_1^0]\longrightarrow \min_{f_x, f_y, f_z}.
$$
This yields
\begin{equation}\label{Hext}
\begin{pmatrix} f_x \\ f_y\\ f_z 
\end{pmatrix}
=
\begin{pmatrix}
    \frac{2}{5} &  \frac{2}{5} &  \frac{1}{5}\\
     \frac{1}{5} &  \frac{2}{5} &  \frac{2}{5}\\
      \frac{2}{5} &  \frac{1}{5} &  \frac{2}{5}
\end{pmatrix}
\begin{pmatrix} f_{v_1} \\ f_{v_2}\\ f_{v_3}
\end{pmatrix}
+
\begin{pmatrix}
   \frac{3}{10} &  \frac{1}{10} &  \frac{1}{10}\\ 
    \frac{1}{10} &  \frac{3}{10} &  \frac{1}{10}\\ 
     \frac{1}{10} &  \frac{1}{10} &  \frac{3}{10}
\end{pmatrix}
\begin{pmatrix} 0\\ \rho_0\\ -2\rho_0 .
\end{pmatrix}
\end{equation}
Having found $f_x,$ $f_y,$ and $f_z$, we now use the harmonic extension \eqref{classical-extension} in the 
subdomains $v_1xz$, $xv_2z$, and $zyv_3$. 
The uniqueness of the minimizer follows by construction.
Properties 1. and 2. stated in the lemma follow from 
 \eqref{variational-harmonic}-\eqref{variational-extension}.
\end{proof}

\begin{remark}
    The first term on the right-hand side of \eqref{Hext} coincides with the 
    standard harmonic extension (cf.~\eqref{classical-extension}).
    The second term accounts for the jump condition $f(z_+^0)=f(z_-^0)+\rho_0$.
  In \cite{Strich02},  the extension formula \eqref{Hext} was derived without
    the use of the covering space.
\end{remark}

Having found the minimizer $f^\ast$ on $V^0_\ast$, we extend it by continuity to the fundamental domain $G^0$,
and further to a uniformly
continuous function on the covering space:
\begin{equation}\label{k-period}
\mathbf{f}^\ast (x,k) = f^\ast(x) +k , \qquad x\in G^0, \; k\in \Z.
\end{equation}

\begin{theorem}\label{lem.ext_harmonic_on_G}
  The restriction of $\mathbf{f}^\ast$ on $\tilde V_m$, $\mathbf{f}^\ast|_{\tilde V_m}$, is $\tilde\Gamma_m$-harmonic
  for every $m\in \N$:
 \begin{equation}\label{G-harmonic}
  \Delta_m \mathbf{f}^\ast|_{\tilde V_m}(x) = 0\quad \forall x\in \tilde{V}_m.   
  \end{equation}
\end{theorem}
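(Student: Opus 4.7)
The plan is to split the verification of discrete harmonicity on $\tilde \Gamma_m$ according to whether a given node of $\tilde V_m$ is ``interior to a single sheet'' or is one of the glued pairs $z_+^k\simeq z_-^{k+\rho_0}$. These latter nodes are the only places where the neighbor set in $\tilde\Gamma_m$ genuinely differs from the neighbor set inherited from a cut sheet $\Gamma_m^k$. The structural fact that drives everything is the sheet-periodicity $\mathbf{f}^\ast(x,k)=f^\ast(x)+k$: on every sheet $G^k$, $\mathbf{f}^\ast$ is an integer translate of the fundamental-domain minimizer $f^\ast$. Since $\Delta_m$ only involves \emph{differences} of function values, this additive shift is invisible to the Laplacian, so each per-sheet computation pulls back to a computation for $f^\ast$ on $G^0$.

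For a node $x\in V_m^k$ that is neither a cut point $z_\pm^k$ nor on the outer boundary $V_0^k$, the $\tilde\Gamma_m$-neighborhood of $x$ is identical to its $\Gamma_m^k$-neighborhood. Absorbing the $+k$ shift in the difference, $\Delta_m\mathbf{f}^\ast(x)$ equals the Laplacian of $f^\ast$ on $\Gamma_m^0$ at the corresponding node of $V_m^0$. By the construction of $f^\ast$ in Lemma~\ref{lem.minimizer} --- which first solves the modified level-one system \eqref{Hext} and then performs the standard $\frac{1}{5}$-$\frac{2}{5}$ harmonic extension \eqref{classical-extension} inside each subcell --- this Laplacian vanishes at every such node.

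The key case is the identified vertex $v^\ast\doteq z_+^k\simeq z_-^{k+\rho_0}$. Its neighbor set in $\tilde\Gamma_m$ is the disjoint union of the $\Gamma_m^k$-neighbors of $z_+^k$ and the $\Gamma_m^{k+\rho_0}$-neighbors of $z_-^{k+\rho_0}$, so $\Delta_m\mathbf{f}^\ast(v^\ast)$ naturally splits into two partial sums. Using the periodicity $\mathbf{f}^\ast(x,k)=f^\ast(x)+k$ to cancel the additive shifts within each sum, these partial sums become the Laplacian of $f^\ast$ at $z_+^0$ and at $z_-^0$, computed on the cut graph $\Gamma_m^0$. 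Vanishing of their sum is exactly the Euler--Lagrange condition for the constrained minimization \eqref{variational}: because the jump constraint $f(z_+^0)-f(z_-^0)=\rho_0$ collapses these two values into a single free variable $u=f(z_-^0)$, stationarity with respect to $u$ produces precisely $\Delta_m f^\ast(z_+^0)+\Delta_m f^\ast(z_-^0)=0$.

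The one conceptual point requiring genuine care --- and the payoff of having set up the covering space --- is the dictionary that identifies the coupled, constrained Laplacian condition at $z_-^0,z_+^0$ on the fundamental domain with an ordinary unconstrained discrete Laplace equation at a single glued node of $\tilde\Gamma_m$. Once this dictionary is articulated, the remainder of the argument is bookkeeping: matching neighbor sets on each sheet with their images in $\tilde\Gamma_m$ and confirming that the translations $+k$ are consistently absorbed by the translation-invariance of $\Delta_m$.
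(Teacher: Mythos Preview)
Your argument is correct, and it takes a genuinely different route from the paper's proof at the key step (harmonicity at the glued vertex $z_+^k\simeq z_-^{k+\rho_0}$). The paper does not differentiate the constrained energy directly. Instead it glues two adjacent cut sheets $\Gamma_m^0$ and $\Gamma_m^{-\rho_0}$ into a single graph $\Gamma_m$ with a combined jump of $2\rho_0$ across the outer pair $z_-^{-\rho_0},z_+^0$, considers the minimizer $\psi=(\psi_0,\psi_1)$ of $E[\Gamma_m]$ on this doubled graph, and observes that the swapped function $\tilde\psi=(\psi_1+\rho_0,\psi_0-\rho_0)$ is also a minimizer; uniqueness then forces each half of $\psi$ to satisfy the single-sheet jump $\rho_0$, so $\psi$ coincides with the piecewise function built from $f_m^0$ and $f_m^{-\rho_0}$. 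Since $z$ is an honest interior vertex of the doubled graph, harmonicity there follows. Your approach bypasses this symmetry/uniqueness trick by reading off the first-order optimality condition of the constrained problem on a single sheet: because $f(z_+^0)$ and $f(z_-^0)$ are tied by $f(z_+^0)=f(z_-^0)+\rho_0$, stationarity in the single free parameter $u=f(z_-^0)$ yields $\Delta_m f^\ast(z_+^0)+\Delta_m f^\ast(z_-^0)=0$ directly. Your route is more elementary and transparent; the paper's route is more structural, making the glued vertex literally interior in a larger minimization problem, which arguably scales more cleanly to the higher-order and p.c.f.\ settings where several cut points are in play.
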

\begin{proof}
  By construction, $\mathbf{f}^\ast|_{V_m^k}$ is the unique minimizer of
$
E[\Gamma^k_m] 
$
on $H_m^k$. Thus, $\mathbf{f}^\ast|_{V_m^k}$ is $\Gamma^k_m$-harmonic and
it remains to check that $\mathbf{f}^\ast|_{\tilde V_m}$ satisfies the 
discrete Laplace equation at $z_{\pm}^k$. It is sufficient to
verify this only at $z_{-}^0$.

To simplify the notation, for the reminder of the proof we set $r:=\rho_0$.
Denote the
graph obtained by gluing $\Gamma_m^0$ and $\Gamma_m^{-r}$ by
$$
\Gamma_m=\Gamma_m^0 \underset{z_+^{-r}\simeq z_-^0}{\bigsqcup} \Gamma_m^{-r}.
$$
where for the vertex set of $\Gamma_m$ we take  the union of the node sets
of these two graphs with $z\doteq z_+^{-r}\simeq z_-^0$ and keep the adjacency relations from
$\Gamma_m^0$ and $\Gamma_m^{-r}$. Denote the node set of $\Gamma_m$ by $V_m$.

Define $f_m\in L(V_m,\R)$ by
\begin{equation}\label{def-f}
  f_m(x)=\left\{ \begin{array}{cc}
                   f_m^0(x),& x\in V_m^0,\\
                   f_m^{-r}(x), & x\in V_m^{-r}.
                 \end{array}
               \right.
             \end{equation}
             We want to show that $f_m$ is the minimizer of
             \begin{equation}\label{new-min}
               E[\Gamma_m](f)\longrightarrow \min_{\phi \in H_m},
             \end{equation}
             where
             $$
             H_m=\{ \phi\in L(V_m,\R):\; \phi(v_1^0)=0, \; \phi(v_2^0)=\delta_1,\; \phi(v_3^0)=\delta_1+\delta_2,
             $$
             $$
             \phi(v_1^{-r})=-r,\;  \phi(v_2^{-r})=\delta_1-r,\; \phi(v_3^{-r})=\delta_1+\delta_2-r, \; \phi(z_+^0)=\phi(z_-^{-r})+2r\}.
             $$
             This will imply that $f_m$ is $\Gamma_m$-harmonic at $z$, because $z$ is an interior point of $\Gamma_m$.

             First, note that minimization problem \eqref{new-min} has a unique solution (cf.~Lemma~\ref{lem.minimizer}),
             which we denote by
             \begin{equation}\label{def-psi}
             \psi =\left\{ \begin{array}{cc}
                             \psi_0,& x\in V_m^0,\\
                             \psi_1,& x\in V^{-r}_m.
                           \end{array}\right.
                         \end{equation}
                         Using the symmetry of the energy form, observe that if \eqref{def-psi} minimizes \eqref{new-min} then so does
                        \begin{equation*}
             \tilde\psi =\left\{ \begin{array}{cc}
                             \psi_1+r,& x\in V_m^0,\\
                             \psi_0-r,& x\in V^{-r}_m.
                           \end{array}\right.
                         \end{equation*}  
  The uniqueness of the minimizer \eqref{new-min} then implies
                           $$
                           \psi_0(z^0_+)-\psi_0(z_-^0)=\psi_1(z^{-r}_+)-\psi_0(z^{-r}_-)=r.
                           $$
                           From this we conclude that      $\psi|_{V^0_m}\in H_m^0$ and $\psi|_{V^{-r}_m} \in H_m^{-r}$.
                           Thus, $\psi=f_m$ (cf., \eqref{def-f}), which means that $f_m$ is $\Gamma_m$-harmonic.
\end{proof}

\section{Simple harmonic maps from SG to $\T$}
\label{sec.HM}
\setcounter{equation}{0}

Having constructed $\mathbf{f}^\ast$, a harmonic function on the covering space, we are one step away 
from producing
a HM on SG. This is done by restricting the domain of $\mathbf{f}^\ast$ to the fundamental domain 
and by projecting the
range of $\mathbf{f}^\ast$ to $\T$: 
\begin{equation}\label{project-back}
    \hat f^\ast\doteq \mathbf{f}^\ast|_{G_0}\mod 1.
  \end{equation}

  By construction $\hat f^\ast$ satisfies the boundary conditions \eqref{bc-hm}.
  By Theorem~\ref{lem.ext_harmonic_on_G}, for every $m\in\N$, the restriction of
$\hat f^\ast$ to the triangular mesh $\Gamma_m^0$,
$\hat f^\ast_m\doteq \hat f^\ast\left|_{\Gamma_0^m}\right.$ is $\Gamma_m$-harmonic:
\begin{equation}\label{local-harm}
  \sum_{y:y\sim_m x}\left[ \hat f^\ast_m(y)- \hat f^\ast_m(x)\right]\mod 1 =0,\quad
  x\in V_m^0\setminus \{v_1, v_2, v_3\}.
\end{equation}
Since \eqref{local-harm} holds for every $m\in\N$, $\T$-valued $\hat f^\ast$ is harmonic at every
$x\in V^\ast\setminus V^0$ and satisfies boundary conditions \eqref{bc-hm}.

Next we give a variational interpretation of $\hat f^\ast$.
To this end, let $\delta_1, \delta_2\in \R$ and $\rho_0\in \Z$ are the same as above and denote
\begin{equation}\label{H-hat}
    \hat H=\{ f\in C(G,\T):\; \p_{v_1v_2} f=\delta_1, \; \p_{v_2v_3} f=\delta_2,\; 
    \p_{v_3v_1} f=\rho_0-\delta_1-\delta_2\}
\end{equation}
(see \eqref{bc-hm} for the definition of $\partial_{v_iv_j}$). 

Below we show that $\hat f^\ast$ minimizes 
\begin{equation}\label{Em-hat}
    \hat\cE_m(f)=\left(\frac{5}{3}\right)^m\sum_{xy\in E(\Gamma_m)}d_\T\left(f(x), f(y)\right)^2
  \end{equation}
over $f\in \hat H$.   Here, $d_\T(\cdot,\cdot)$ stands for the geodesic distance on $\T$. 

Let $\hat f\in \hat H$.
Since $\hat f$ is uniformly continuous on $G$, for sufficiently large 
$m$,
\begin{equation}\label{coincide}
 \hat\cE(f)= E[\Gamma_m^0](f),\quad m\ge m_0\in\N,
 \end{equation}
 where $f$ is the real-valued lift of $\T$-valued $\hat f$.

 Since $E[\Gamma_m^0](f)$ is nondecreasing in $m$, $\hat\cE_m(f)$ is nondecreasing
 for $m\ge m_0$. Thus,
 \begin{equation}\label{hat-cE-min}
\hat\cE(f)=\lim_{m\to m} \hat\cE_m(f) 
\end{equation}
 is well-defined. Using \eqref{coincide} and Lemma~\ref{lem.minimizer}, we conclude that
 $\hat f^\ast$ is the minimizer of $\hat\cE$ over $\hat H$.

\section{Higher order maps}\label{sec.higherorderSG}
\setcounter{equation}{0}
In the previous section, we have completed the construction of HMs with
simple degree. In this section, we extend the algorithm for the case of 
an arbitrary degree.

Let $f: G\to\T$ be a HM of degree 
\begin{equation}\label{deg-f}
  \bar\omega(f) = (\rho_0,\rho_1,\rho_2,\rho_3, \dots)\in\Z^\N.
\end{equation}
We say that $f\in C(G,\T)$ is a HM of order $N\in\N$ if
\begin{align*}
  \rho_i\neq 0, & \quad\mbox{for some}\quad \frac{3^N-3}{2}< i\le \frac{3^{N+1}-3}{2},\\
  \rho_j=0, & \quad\mbox{for all}\quad  j> \frac{3^{N+1}-3}{2}.
\end{align*}

The construction of the covering space for higher order HMs requires additional cuts, which we 
explain below. To this end, note that the degree of the HM $f$ of order $N$ can be rewritten as
\begin{equation}\label{re-deg}
  \bar\omega(f) = \left(\rho_{\ell(w)}, \; |w|\le N\right),
\end{equation}
where
\begin{align*}
    \ell(w)=0, & \qquad w=\emptyset,\\
    \ell(w)=\sum_{i=1}^m w_i\cdot 3^{i-1}, &\qquad  w=(w_1, w_2, \dots, w_m)\in S^m.
\end{align*}

For $w=\emptyset$, we choose the same cut points as before
$$
\xi^k_0\doteq v^k_{1\bar{3}}=v^k_{3\bar{1}}, \qquad k\in\Z.
$$
and identify 
$$
v^k_{1\bar{3}}\simeq v^{k+\rho_0}_{3\bar{1}}. 
$$
At step $m>0$, we have $0<|w|=m\le N$.
For each $w=(w_1,w_2, \dots, w_m)$, let $\gamma_w$ stand for the boundary of the triangular
cell $T_w$. Recall the reference points $\xi_{\gamma_w}$, which were defined in Section~\ref{sec.maps}
(see Figure~\ref{fig:f-cuts}). These points will serve as the cut-points at each level.

Denote the two itineraries of $\xi^k_{\gamma_w}$ by $c_-(\xi^k_{\gamma_w})<c_+(\xi^k{\gamma_w})$.

Next, we cut at $\xi_{\gamma_w}^k, \; k\in\Z,$ and identify
\begin{equation} 
	\label{eq:sg_gen_cuts}
v^k_{c_-(w)}\simeq v^{k+\ell(w)}_{c_+(w)}.
\end{equation}
The cut points for the first, second, and third order HMs are shown in Figure~\ref{fig:f-cuts} 
(also see Appendix \ref{appendix} for explicit formulae for the cut points needed for
construction of the second order HMs). 

The remainder of the algorithm proceeds as in Section~\ref{sec.harm-struct}. Specifically, 
let $\tilde G=G_\times/^\simeq.$ Then approximate the sheets of $\tilde G,$ $G^k,$ by graphs
$(\Gamma^k_m, \; k\in \Z, \; m\in \N).$ Define discrete spaces
$$
\begin{array}{ll}
H_m^k=\left\{ \right. f\in L(V_m^k),\R): &f(v_1^k)=k,\; f(v_2^k)=\delta_{1}+k,\; f(v_3^k)=\delta_{2}+\delta_1+k,\\
 & f(v^k_{c_+(w)})=f(v^k_{c_-(w)})+\rho_{\ell(w)},
\; |w|\le N \left. \right\}.
\end{array}
$$

As in Lemma~\ref{lem.minimizer}, one can show that for every $m\in\N$
$$
E[\Gamma^0_m]\longrightarrow \min_{f\in H^0_m}
$$
has a unique solution, $f^0_m,$ which can be computed via harmonic extension.
By extending this solution to $\tilde G$ as in \eqref{k-period}, we obtain a 
$\tilde\Gamma_m$-harmonic function. By repeating this procedure for all $m\in\N$,
we obtain the values of harmonic $\mathbf{f}\in L(\tilde G, \R)$, which after restricting to the fundamental domain and projecting to $\T$ yields the desired
HM with the prescribed degree \eqref{def-f}. Figure~\ref{fig:sg1}{\bf d} shows an example of a HM with degree $\bar{\omega}(f)=(1,1,1,1)$.

\section{Extending to p.c.f. fractals}\label{sec.pcf}
\setcounter{equation}{0}

  In the previous section, we used SG to develop a method for constructing $\T$-valued HMs on a fractal.
  While the SG was a convenient example, the method itself is not restricted to the specific geometry of the SG.
  It is the goal of this section to clarify the scope of our method.

  We begin by reviewing the main ingredients of the method at hand.
  \begin{description}
  \item[(a)] The covering space is used to accomodate the global topological restrictions of a given
    homotopy class encoded in the degree vector.
  \item[(b)] The harmonic extension algorithm is used to compute real-valued harmonic function subject to
    given boundary conditions.\footnote{Strictly speaking, we only need existence and uniqueness of solution
      of the minimizer of $\hat\cE\rightarrow \min_{\hat H}$ (cf.~\eqref{hat-cE-min}) or existence and uniqueness of
      solution of the boundary value problem \eqref{Laplace}, \eqref{bc-hm}.}
  \end{description}

  Below, we examine the conditions on fractal sets required for
  assumptions (a) and (b) above.
  In addition to compactness and finite ramification, which were
  discussed previously,
  it is essential that the fractal supports a harmonic structure, as
  required for part (b). The natural class
  of fractals for which one may expect the main result of this paper
  to hold is that of p.c.f. fractals \cite{Kig01}.
  P.c.f. fractals were introduced by Kigami as a class of domains on
  which the Laplacian can be defined as
  the limit of a sequence of appropriately rescaled graph Laplacians,
  in direct analogy with the
  construction for the SG. The existence of a harmonic structure on a
  p.c.f. fractal is not automatic;
  rather, it depends on the solvability of a nonlinear eigenvalue
  problem, referred
  to as the renormalization problem \cite{Kig01}. For many important
  examples
  this renormalization problem can be solved, making p.c.f. fractals a
  rich source of domains on which analysis can be systematically
  developed.
  For this reason, we adopt p.c.f. fractals as the primary class of
  domains
  for the construction of HMs to the circle.

  Before we give the definition of a p.c.f. fractal $K$ and and formulate a set of assumptions on $K$, we explain
  the nature of these assumptions.
  \begin{description}
  \item[(0)] General topology. As an attractor of an iterated function system $K$ is automatically compact.
Further, as a p.c.f. set $K$ is finitely ramified (cf.~\cite{Str06}).
   We will however need to assume in addition that $K$ is connected.
  \item[(1)] Harmonic structure. Our key assumption on $K$ is the existence of self-similar compatible sequence
    of discrete Dirichlet forms (see \eqref{self-similarity}, \eqref{compatibility}). Harmonic extension relies 
    on this assumption.
  \item[(2)] Cycle space. One important distinction of the SG from a general p.c.f. fractal is the natural hierarchy
    of nested triangular loops $\gamma_0, \gamma_1, \gamma_2,\dots$,
    which span the cycle space of each
    $\Gamma_m$ and whose union yields a basis of the cycle space of
    the fractal skeleton $\cup_{m=0}^\infty \Gamma_m$.
    For a general p.c.f. fractal,
    it will not be immediately obvious how to choose a basis at a given level.
    
    To overcome this problem, we first identify $N$, the lowest level that contains all
    loops with nontrivial winding number. Such $N$ exists due to uniform continuity.
    Next, we fix a spanning tree of $\Gamma_N$, which will in turn determine the
    corresponding basis of the cycle space.

    The chosen basis needs to satisfy certain properties, which 
    guarantee that we can further choose appropriate cut points. Below, we formulate these
    conditions as a set of assumptions on a fractal set.
     \end{description}


  The remainder of this section is organized as follows. In \S~\ref{sec.pcf-assumptions}, we review the
  definition of a p.c.f. fractal and a
  harmonic structure on the p.c.f. fractals following \cite{Kig01} (see \cite{Str06, Saenz-HarmAnal} for
  a concise elementary introduction to p.c.f. fractals).
Subsequently, in \S~\ref{sec.pcf-main}, we discuss specific features of the SG that are absent in a general
p.c.f. set. These differences prompt certain adjustments in our approach to the general case. 
After discussing the setting and formulating the corresponding assumptions, we present 
Theorem~\ref{thm.pcf}, the main result of this section.
This is followed by a discussion of several representative examples in \S~\ref{sec.pcf-examples}.

\subsection{P.c.f. fractals and harmonic structure}\label{sec.pcf-assumptions}

Let $K\subset\R^d$ be an attractor of an iterated function system, i.e., $K$ is compact
set satisfying
\begin{equation}\label{IFS}
    K=\bigcup_{i=1}^N F_i(K),
\end{equation}
where $F_i:\R^d\to\R^d, \; i\in [N],$ such that 
$$
|F_i(x)-F_i(y)|= \lambda_i |x-y|, \qquad 0<\lambda_i\le \lambda<1.
$$
In addition, we assume that $K$ is connected.

Let $S=\{1,2,\dots, N\}$ and define $m$-cells as before
$$
K_w=F_w(K), \quad w=\left(w_1,w_2, \dots, w_m\right)\in S^m.
$$
The critical set of $K$ is defined by
$$
\mathcal{C}\doteq \bigcup_{i\neq j} (K_i\bigcap K_j).
$$
Assume that $\mathcal{C}\neq\emptyset$ and define 
\begin{equation}\label{pcf-boundary}
V_0\doteq \bigcup_{m\ge 1} \bigcup_{|w|=m} F^{-1}_w(\mathcal{C}).
\end{equation}
\begin{definition} Let $K$ be a compact connected set satisfying \eqref{IFS}.
$K$ is called a p.c.f. fractal if $V_0$ is finite.
\end{definition}

For the remainder of this section, we assume that $K$ is a p.c.f. fractal. 

Let 
$$
V_m=\bigcup_{i=1}^N F_i(V_{m-1}), \quad m\ge 1.
$$
Note that 
$$
V_m=\bigcup_{|w|=m} F_w(V_0),
$$
as in the case of SG discussed above. Further,
$$
V_\ast=\bigcup_{m\ge 0} V_m
$$
is dense in $K$.

Next, we we equip $K$ with a harmonic structure.
Specifically, we introduce a sequence of quadratic forms
$$
\mathcal{E}_m(u)=\frac{1}{2}\sum_{i,j=1}^N c^m_{ij} \left(u(j)-u(i)\right)^2, \quad u\in L(V_m,\R),
$$
where $(c^m_{ij})$ is a positive definite $m\times m$ matrix.

\begin{assumption}\label{as.harmonic}
The sequence $(\mathcal{E}_m)$ satisfies the following two conditions:
\begin{description}
    \item[i] Self-similarity. There exist $0< r_1, r_2, \dots, r_N<1$ such that
    \begin{equation}\label{self-similarity}
    \mathcal{E}_m(u)=\sum_{i=1}^N \frac{1}{r_j} \mathcal{E}_{m-1} (u\circ F_j), \quad u\in L(V_m,\R).
    \end{equation}
    \item[ii] Compatibility. For every $m\ge 1,$
    \begin{equation}\label{compatibility}
    \mathcal{E}_{m-1}(v)=\min\left\{  \mathcal{E}_m(u):\; u\in L(V_m,\R),\; u|_{V_{m-1}}=v\right\}.
    \end{equation}
  \end{description}
\end{assumption}

The compatibility condition implies that $\mathcal{E}_m(u|_{V_m})$ is 
nondecreasing for any $u\in C(K,\R)$.
Thus, 
$$
\mathcal{E}(u)\doteq \lim_{m\to\infty}\mathcal{E}_m(u|_{V_m}),\quad u\in \operatorname{dom}\mathcal{E}.
$$
is well-defined and 
$$
\operatorname{dom}\mathcal{E}=\{ u\in C(K,\R): \; \lim_{m\to\infty} \mathcal{E}_m(u|_{V_m})<\infty\}.
$$
Furthermore, thanks to \eqref{compatibility}, the minimization problem 
$$
\mathcal{E}(v)\longrightarrow \min_{v\in C(V_\ast,\R)}
$$
can be solved recursively
$$
v_m^\ast=\operatorname{argmin} \mathcal{E}_m\left( v|_{V_{m-1}}=v^\ast_{m-1}\right), \qquad m=1,2,\dots.
$$
This is the basis of the harmonic extension algorithm.

The harmonic structure implicitly defines a sequence of graphs $\Gamma_m=\langle V_m, \sim_m\rangle$
approximating $K$:
$$
x\sim_m y\quad\mbox{if}\quad c_{xy}^m>0, \quad x,y\in V_m.
$$

\begin{figure}[h]
 	 \centering
 	\includegraphics[width = .5\textwidth]{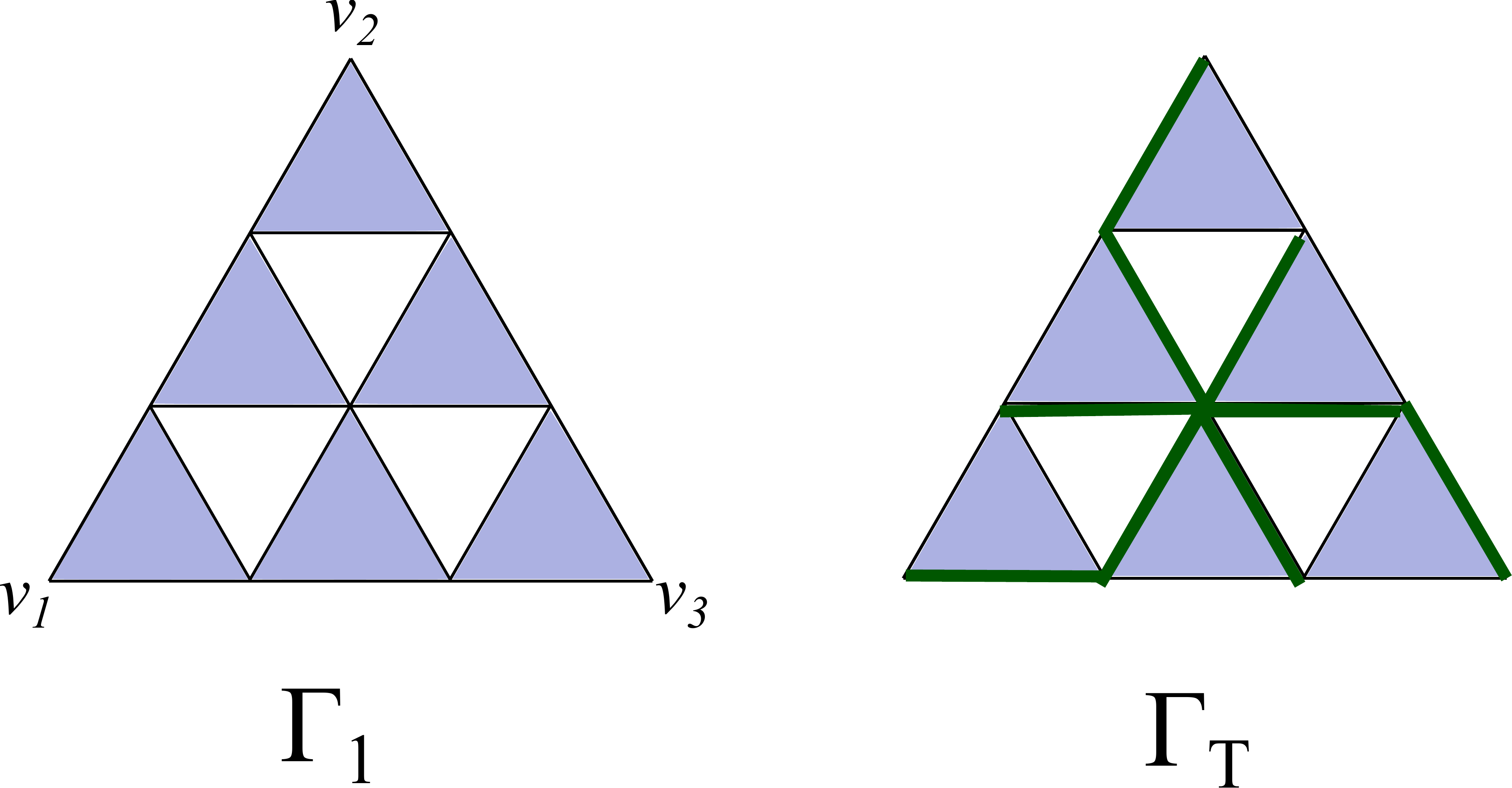}
 	\caption{Graph approximation $\Gamma_1$ for $SG_3$, and a spanning tree, $\Gamma_T$.}
 	\label{fig:sg3_tree}
 \end{figure}
 
  \begin{figure}[h]
 	\centering
 	{\bf a)}\includegraphics[width = .45\textwidth]{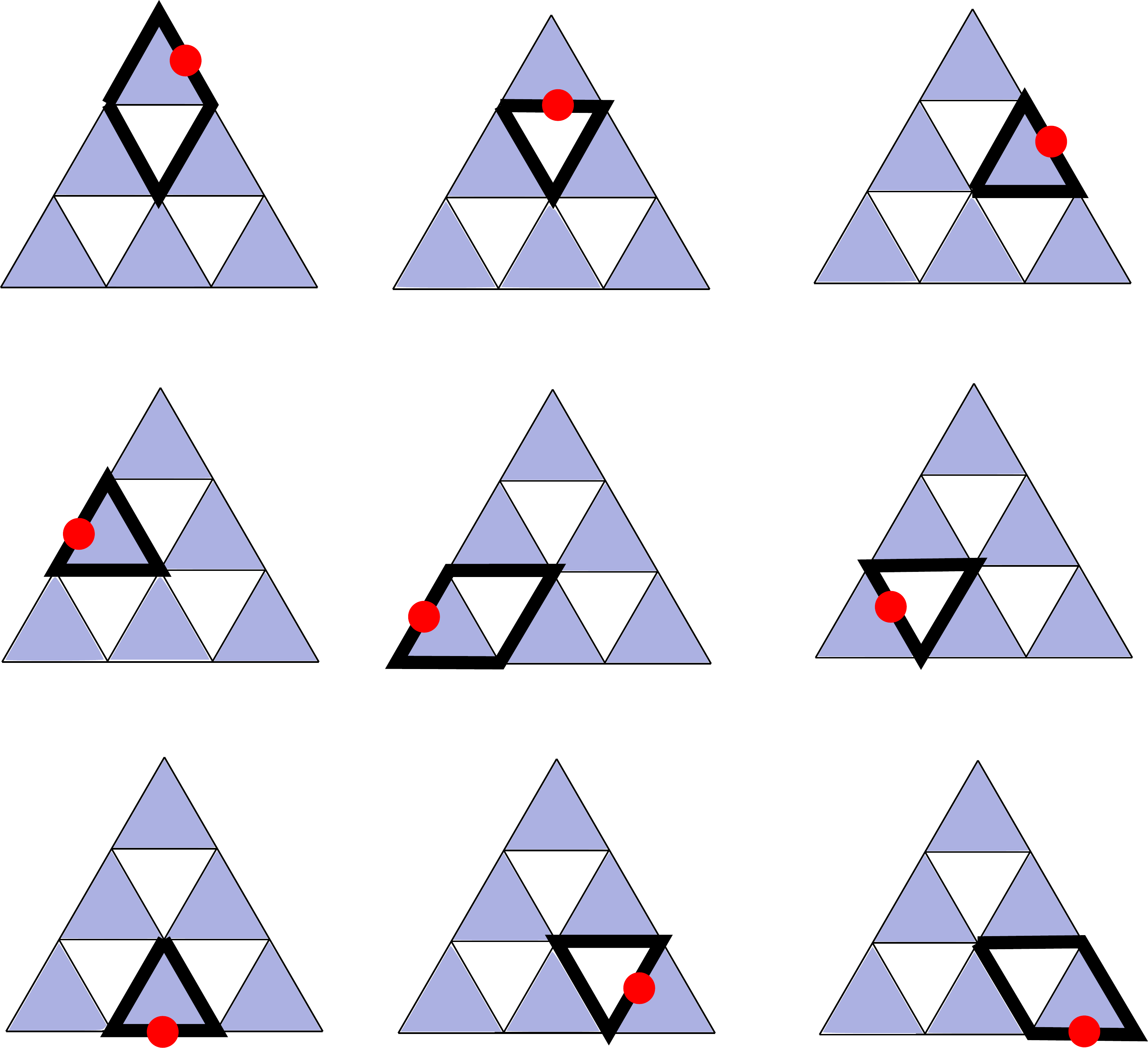}
 	\hspace*{1cm}
 	 	{\bf b)}	\includegraphics[width = .4\textwidth]{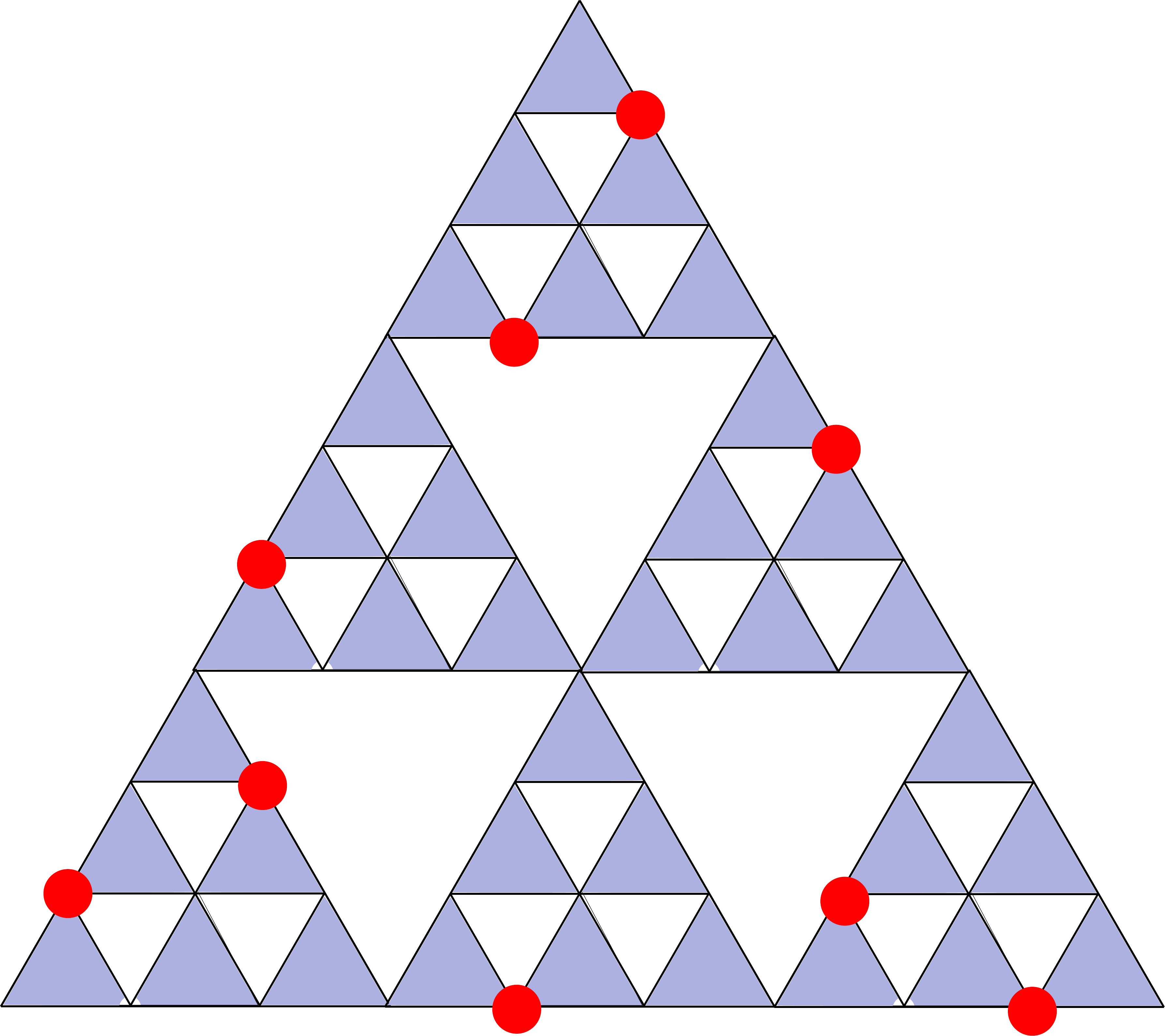}
                \caption{{\bf a)} The basis of the cycle space generated by $\Gamma_T$ in Figure \ref{fig:sg3_tree}.
                  Red points indicate where cut points should be chosen to satisfy Assumption \ref{as.cycle}.
                  {\bf b)} Cut points shown on $\Gamma_{2}$. Unlike $G$, there are non-unique choices for the cut points.}
 	\label{fig:sg3_basis}
      \end{figure}
      
\subsection{The cycle space and the main result}\label{sec.pcf-main}

  We now specify our geometric assumptions on the cycles of $K$. 
Before doing so, it is instructive to consider the $3$-level 
SG shown in Fig.~\ref{fig:sg3_tree}.

It is tempting, by analogy with the standard SG,
to conclude that a basis of the cycle space of $\Gamma_1$
consists of the outer boundary of the large triangle together
with the boundaries of the six smaller triangles.
However, this would be incorrect.
The spanning tree displayed in the right panel of
Fig.~\ref{fig:sg3_tree} shows that the cycle space is
six-dimensional.
The corresponding basis is presented in
Fig.~\ref{fig:sg3_basis}.

This example demonstrates that, in contrast to the situation
for the SG, one cannot in general rely on the
existence of a nested family of cycles to determine the basis of the cycle space.
Consequently, a basis for the cycle space must be constructed
individually in each case.


The second point that needs to be addressed is the relation between cycle
spaces at different levels of discretization. Specifically, let $m\in\N$ be the lowest level,
which contains all cycles with nontrivial winding numbers 
 and denote
    by $\Gamma_m$ the graph approximating $K$ at level $m\in\N$. Choose a basis for the cycle
    space of $\Gamma_m$, $\operatorname{Cyc}(\Gamma_m)$:
\begin{equation}\label{basis-m}
\mathcal{B}_m=\left\{\gamma_1^{(m)}, \gamma_2^{(m)},\dots, \gamma_{n_m}^{(m)} \right\}.    
\end{equation}

Let $\gamma\in\cB_m$ be such that $\omega_\gamma(f)\neq 0$. Thus, $\gamma$ needs to be
included in the construction of the covering space. In particular, we need to choose a cut point on
$\gamma$ as we have done for the SG. The problem is that the cut point must be chosen from the
vertices at the next level of discretization $V_{m+1}\setminus V_m$. In case of the SG, we were able
to formulate a simple rule how to choose these points at any given level. For a  general p.c.f. fractal
the situation is more interesting. For instance, in case of the pentagasket shown in
Figure~\ref{fig:penta_cuts}, the cut points  do not belong to the original cycles, but to their suitably
chosen counterparts at the next level of discretization. Since there does not seem to be any
canonical way of embedding the cycles from the basis of the cycle space to the cycle space at the next
level, we postulate such an embedding and require that it  satisfies certain conditions. We verify
these conditions in each example of p.c.f. fractals below.

For the next assumption we will need the following notation.
Let $V(\gamma)\subset V_m$ stand for the set of vertices belonging to $\gamma\in\cyc{\Gamma_m}$.

\begin{assumption}\label{as.embed}
  Let $\gamma\in\cB_m$ then there is a mapping $\iota:\cB_m\rightarrow\cyc{\Gamma_{m+1}}$ such that
  \begin{enumerate}
  \item $V(\gamma)\subset V\left(\iota(\gamma)\right)$,
    \item $\omega_\gamma(f)=\omega_{\iota(\gamma)}(f)\quad \forall \gamma\in\cB_m.$
    \end{enumerate}
  \end{assumption}

  \begin{assumption}\label{as.cycle}
    We assume that there are points $\xi^m_k\in V_{m+1}\setminus V_m$ such that
    $\xi_k^m\in \iota\left(\gamma_k^{(m)}\right)$
    but $\xi_k^m\notin \iota(\gamma_l^{(m)})$ for every $l\neq k$ and
    $k\in [n_m]$.
\end{assumption}

Next, we turn to the boundary value problem for HMs on a general p.c.f. fractal.
Following the strategy developed for the SG, to construct HMs on p.c.f. 
fractals we formulate the boundary value problem for a harmonic function 
on the fundamental domain of the covering space.
To this end, fix level $m\in\N$ and the degree vector 
\begin{equation}\label{eq:winding_vec_general}
  \bar\omega^{(m)}(f)=\left( \omega_{\gamma_1^{(m)}}(f),\dots,
    \omega_{\gamma_{n_m}^{(m)}}(f)\right)=(\rho_1,\dots,\rho_{n_m}),
\end{equation}
where 
$
\omega_{\gamma}(f)
$
is the degree of $f$ restricted to $\gamma$.

Let $\gamma^{(m)}_0\in\operatorname{Cyc}(\Gamma_m)$ be a cycle that contains all boundary vertices,
$V_0\subset V(\gamma_0)$. We express $\gamma^{(m)}_0=\sum_{i=1}^{n_m} a_i\gamma_i^{(m)}$
where $a_i\in\Z$. Then
$$
\omega_{\gamma_{0}^{(m)}}(f)=\sum_{i=1}^{n_m}a_i\rho_i.
$$

We are now prepared to formulate the boundary condition (see \eqref{bc-hm}):
\begin{align}
    \p_{v_1v_2} f&= \delta_1, \nonumber\\
    \p_{v_2v_3} f&= \delta_2, \label{pcf-bc-hm}\\
    &\dots\nonumber\\
    \p_{v_{\ell-1}v_{\ell}} f &= \delta_{\ell-1}\nonumber \\
    \p_{v_{\ell}v_1} f&=\rho_0-\delta_1-\delta_2-\dots-\delta_{\ell-1}, \nonumber
\end{align}
where $\rho_0=\omega_{\gamma_{0}^{(m)}}(f)$ and $V_0=\{v_1,v_2,\dots, {v_\ell}\}.$

\begin{theorem}\label{thm.pcf}
  Suppose $K$ is a connected p.c.f. fractal and Assumptions~\ref{as.harmonic}-\ref{as.cycle} hold.
  Then there is a unique HM $f\colon K\to\mathbb{T}$ satisfying 
  \eqref{eq:winding_vec_general} and \eqref{pcf-bc-hm}.
\end{theorem}

\begin{proof}
  The key constructions and the corresponding results in  
  Sections~\ref{sec.cover}-\ref{sec.higherorderSG} translate to the present
  setting with minor modifications, so we highlight only the main distinctions from the previous
  proof for SG.
  
  First, denote the trivial covering space $K_\times = K\times \mathbb{Z}$.  
  Using Assumption \ref{as.embed}, embed the cycles $\gamma^{(m)}_i$ into
  $\operatorname{Cyc}(\Gamma_{m+1})$.  Using Assumption \ref{as.cycle}, for each $i\in [n_m]$, fix a cut point 
  $v_i^{(m)}=\xi(\gamma_i^{(m)}) \in V_{m+1}$ disjoint from the other embedded  cycles 
  ${\gamma}_j^{(m)}$, $j\neq i$. Finally, make the following identifications in the covering space: 
	\begin{equation*}
			(v^{(m)}_i)_{c_-(v_i^{(m)})}^k \simeq (v_i^{(m)})_{c_+(v_i^{(m)})}^{k+\rho_i},
	\end{equation*}
	where $c_\pm (v_i^{(m)})$ stand for the two itineraries of $v_i^{(m)}$ along the cycle ${\gamma}_i^{(m)}$, cf. \eqref{eq:sg_gen_cuts}. The resultant space  $\tilde{K}\doteq K_\times/\simeq$ is comprised of sheets $K^k$, $k\in\mathbb{Z},$ and has associated graphs $\Gamma_m^k$ with vertices $V_m^k$. As before  $\Gamma^0_m$
 approximates the fundamental domain.
	
 The remainder of the procedure is implemented in analogy with the treatment of HMs on SG.
 Define the discrete spaces with appropriate boundary and jump conditions:
$$
\begin{array}{ll}
	H_m^k=\left\{ \right. f\in L(V_m^k,\R): &
	{f(v_1^k) = k} \\
	&{f(v_i^k) = f(v_{i-1}^k) + \delta_{i-1},\quad 2\leq i \leq \ell}\\
	& f((v_i^{(m)})_{{c+}}^k)=f((v_i^{(m)})_{c-}^k) +\rho_{i},
	\; i\in[ n_m] \left. \right\},
\end{array}
$$
and energy form
\begin{equation}\label{eq:energy_min_gen}
	E[\Gamma_m^k](f) = \frac{1}{2}\sum c_{ij}^m (f(u_j)-f(u_i))^2.
\end{equation}
Assumption \ref{as.harmonic} allows us to repeat the energy minimization along the lines of Lemma \ref{lem.minimizer}, resulting in a unique harmonic 
function on $\Gamma_{m+1}^0$. This can be harmonically extended as in Lemma \ref{lem.ext_harmonic_on_G} to a
dense set of $\tilde{K}$, and further extended by uniform continuity to obtain a harmonic function on the entire
$\tilde{K}$. Restricting to the fundamental domain and projecting the range to $\mathbb{T}$
yields the desired harmonic map. 
\end{proof}

\subsection{Examples}\label{sec.pcf-examples}

The following examples are meant to illustrate key steps in the construction of the covering
space for a set of representative p.c.f. fractals. With the covering space in hand, the rest
of the algorithm follows as discussed in Theorem \ref{thm.pcf}.

For each of the following examples, we use a simple method for generating the required basis
$\mathcal{B}_m$ and associated cut points. Begin with a spanning tree, $\Gamma_T$ of $\Gamma_m$.
Adding any edge, $e$, not contained in $\Gamma_T$ generates a cycle; the collection of all such cycles forms
a basis $\mathcal{B}_m$ \cite{diestel2024graph}. The cut points are constructed using the same edges, $e$.
For each edge, simply select any vertex along the embedded image of $e$ in $\Gamma_{m+1}$.

 \subsection{$\mathbf{SG_n}$}
 
 The level-$n$ SG generalizes the SG \cite{Kig01}. 
 Take $V_0 = \{v_1,v_2,v_3\}$ to be the vertices of a triangle $T$. 
 The $SG_n$ is constructed via an iterated function system by defining $n(n+1)/2$ maps
 \begin{equation*}
 	F_i(x) = \frac{1}{n}(x-c_i)+c_i, \;\; i\in [n(n+1)/2],
 \end{equation*}
 where
 $$
 c_i=\displaystyle\frac{1}{n-1}\sum_{1\leq {j_1}\leq \cdots \leq {j_{n-1}}\leq 3} v_{j_i}.
 $$ 
 Taking $n=2$ results again in SG. For $SG_3$ the $c_i$ are generated by the expressions
 $$
 \frac{1}{2}(v_j+v_k),\; 1\leq j\leq k \leq 3.
 $$


 Figure  \ref{fig:sg3_tree} shows  $\Gamma_1$ and a corresponding spanning tree for $SG_3$. 
 Figure \ref{fig:sg3_basis} shows the associated cycle basis and locations of cut points. 
 
 In contrast to $G$, the boundaries $\{\partial F_w(T)\}$ no longer form a basis for the cycle space of $\Gamma_m$. Indeed, we see $|\{\partial F_w(T): |w|\leq 1\}| = 7$, while for $\Gamma_1$ the cycle space is in fact $9$-dimensional. 
 HMs at this level.


 \subsection{Polygaskets}
 
 Polygaskets generalize the construction of SG to regular polygons. Consider a regular $n$-gon, $P$, with 
 $n\in\N$ not divisible by 4. Then define $n$ homotheties $F_i$ with fixed points the vertices of $P$ such that $F_i(P)$ and $F_j(P)$ intersect at a single point when $i\neq j$. 
 As discussed in \cite{Str06}, one can compose the $F_i$ with rotations so that it is sufficient to take as the boundary just 3 vertices of the $n$-gon: $V_0 = \{v_1,v_2,v_3\}$. 
 
 Figures \ref{fig:hexa_tree} and \ref{fig:hexa_basis} demonstrate the construction
 of the cycle basis and cut points for the first level of the $n=6$ hexagasket
 (this fractal was also studied in \cite{Tang11} following the method of \cite{Strich02}). Like the $SG_n$, edges in the hexagasket embed naturally from $\Gamma_m$ to $\Gamma_{m+1}$, making the identification of cycles and cut points in $\Gamma_{m+1}$ relatively straightforward.  
 
 In contrast, Figures \ref{fig:penta_tree} and \ref{fig:penta_cuts} show the $n=5$ pentagasket. In this case, there is a non-trivial embedding of cycles from $\Gamma_m$ to $\Gamma_{m+1}$.  Nonetheless, it is still possible to construct a cycle basis and cut points in $\Gamma_{m+1}$ satisfying
 Assumption \ref{as.cycle}.

  \begin{figure}[h]
 	\centering
 	\includegraphics[width = .45\textwidth]{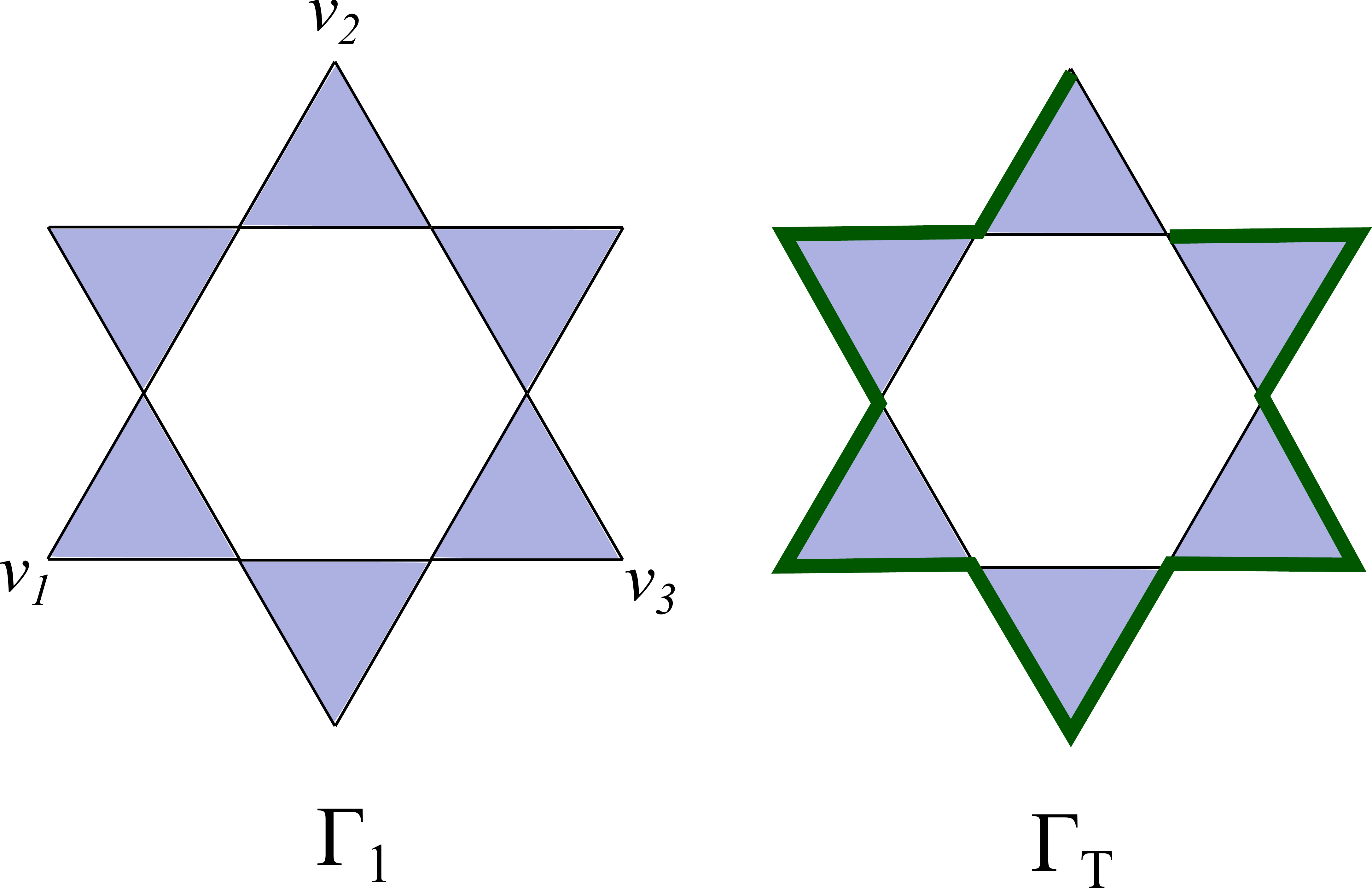}
 	\caption{Graph approximation $\Gamma_1$ for the hexagasket, and a spanning tree, $\Gamma_T$.}
 	\label{fig:hexa_tree}
 \end{figure}
 
 \begin{figure}[h]
 	\centering
 
 	\includegraphics[width = .8\textwidth]{Fig/hexa_cutpoints1.pdf}
 	\caption{Cycle basis generated by $\Gamma_T$ in Figure \ref{fig:hexa_tree}. On the right, cycles are embedded in $\Gamma_{2}$ with cut points in red.}
 	\label{fig:hexa_basis}
 \end{figure}
 
   \begin{figure}[h]
 	\centering
\includegraphics[width = .45\textwidth]{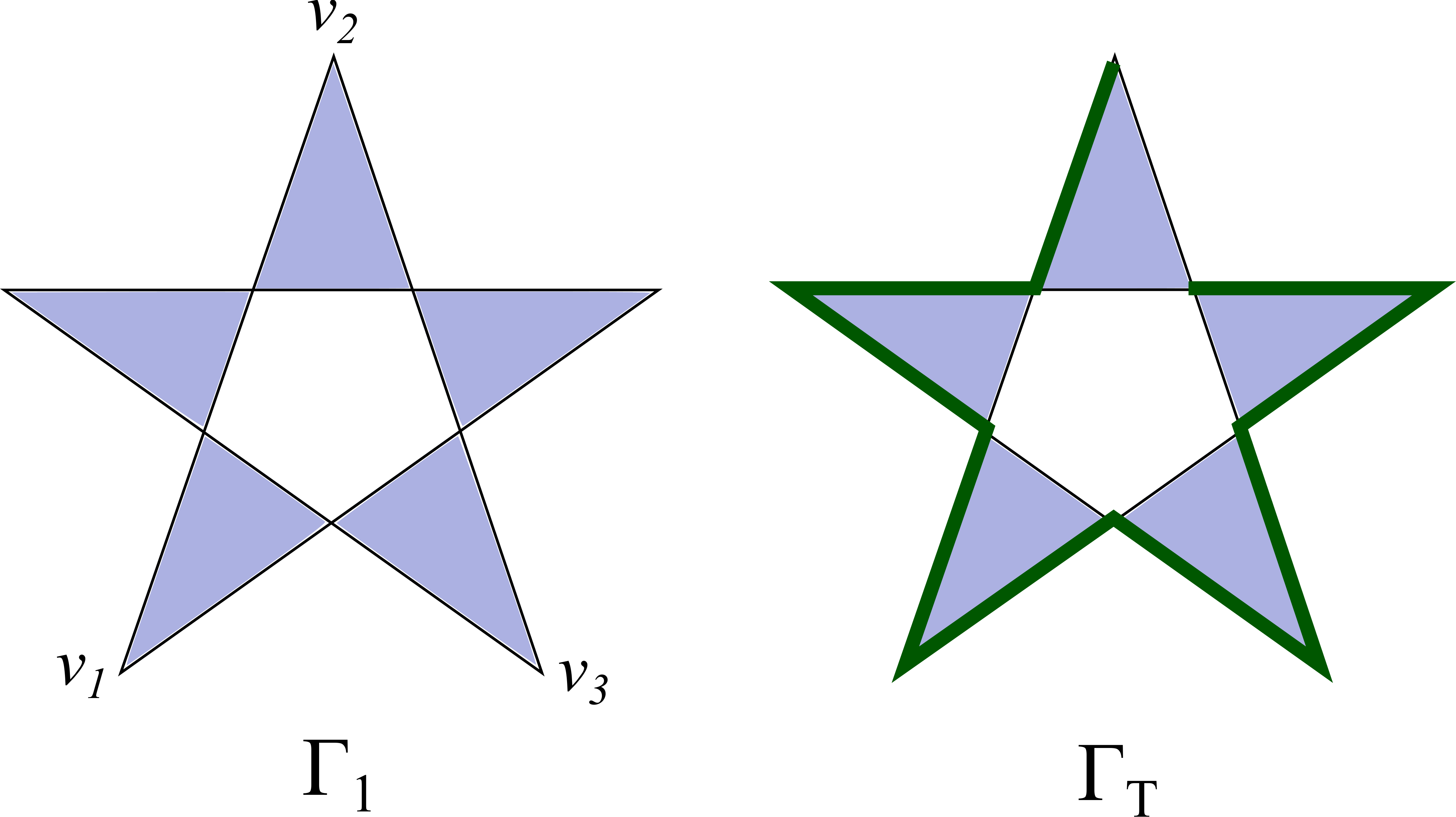}
   	\caption{Graph approximation $\Gamma_1$ for the pentagasket, and a  spanning tree, $\Gamma_T$.}
   	 	\label{fig:penta_tree}
  \end{figure} 

  \begin{figure}[h] 
  	\centering 
  	\includegraphics[width = .8\textwidth]{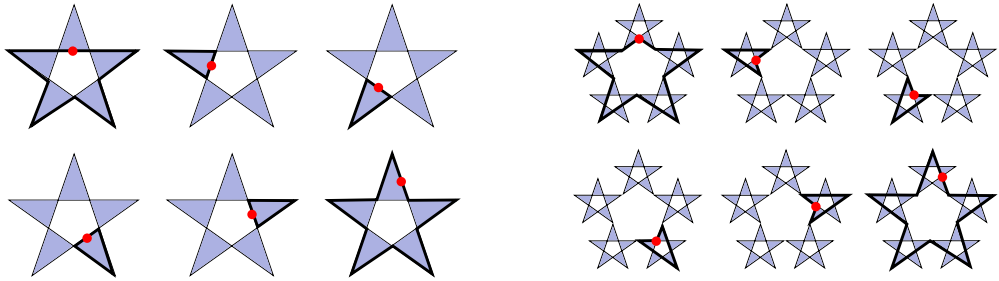}
 	\caption{Cycle basis generated by $\Gamma_T$ in Figure \ref{fig:penta_tree}. On the right, cycles are embedded in $\Gamma_{2}$ with cut points in red.}
 	\label{fig:penta_cuts}
 \end{figure}


%

\section{Discussion}
\setcounter{equation}{0}

In this paper, we presented a method for constructing HMs from the p.c.f. fractals 
to the unit circle using the covering spaces for the underlying fractal. The method
provides a complete description of HMs from the SG to the unit circle. Specifically,
it shows that there is a unique HM satisfying given boundary conditions in each homotopy class.
In addition, the method yields explicit formulae for computing HMs on 
a dense set of points. Our method may be viewed as a natural generalization of the 
classical harmonic extension algorithm for real-valued functions (cf.~\cite{Kig01}; see also \cite{Str06, Saenz-HarmAnal}).

The method extends naturally to a large class of fractals, the p.c.f. fractals.
While the implementation of the computational algorithm and the proof of the main
result remain practically the same as in the case of SG, the analysis of the HMs
on p.c.f. sets reveals subtle differences. Namely, for SG the cycle spaces of approximating 
graphs embed nicely from one level approximation to the next one. This affords an
especially nice and simple way of describing the degree of a HM on SG
(see \eqref{homotopy-c}). Examples of level-$3$ SG
and polygaskets show that in general the relation between cycles
spaces of approximating graphs at different levels of approximation may 
be more complex. This prevented us from claiming
uniqueness in Theorem~\ref{thm.pcf}, even though the gist of the
matter remained the same as before.

By construction, the HM obtained
in the proof of Theorem~\ref{thm.pcf} is still unique in a given homotopy class, provided the latter is fully captured by
\eqref{eq:winding_vec_general}, i.e., there are no nonzero winding numbers at 
smaller scales. For SG, this condition was conveniently expressed
by the nonzero entries of \eqref{homotopy-c} or by the order of the HM. While the meaning of the order of HM is intuitively clear
for a general p.c.f. set, expressing it formally meets certain
technical difficulties, which are postponed for a future study.
Nonetheless, the covering spaces for p.c.f. fractals that we introduced in this work provide an effective visual tool
for describing the 
geometry of HMs to the unit circle and we believe that they
will find other interesting applications.

\noindent {\bf Conflict of Interest Statement.} The authors declear
that there are no conflicts of interest regarding the publication of this paper.

\noindent {\bf Acknowledgements.}
This work was partially supported by the National Science Foundation through
grants DMS-2406941 (GSM) and DMS-2406942 (MM). The authors are
grateful
to Anatolii Grinshpan and Steffen Marcus
for helpful discussions.

\vfill
\newpage
\begin{appendix}
  \section{Appendix: Proof of Theorem~\ref{thm.homotopy}} \label{sec.proof}

    Suppose $f\sim g$. Then there exists $F\in C\left([0,1]\times G, \T\right)$ satisfying \eqref{F-hom}.
    Let $\gamma\in \mathcal{P}$ be arbitrary but fixed. Choose a parametrization $c_\gamma:\T\to\gamma\subset G$ and
    consider $f_\gamma=f\circ c_\gamma$ and $g_\gamma=g\circ c_\gamma$. Then $f_\gamma$ and $g_\gamma$ are two continuous maps
    from $\T$ to itself. They are homotopic with the homotopy provided
    by
    $[0,1]\times \T\ni (t,s) \mapsto F(t, c(s))$.
    By the Hopf degree theorem, $\omega(f_\gamma)=\omega(g_\gamma)$. Since $\gamma$ is arbitrary, we have
    $\bar\omega(f)=\bar\omega(g)$.

    Conversely, suppose $\bar\omega(f)=\bar\omega(g)$. In particular,
    \begin{equation}\label{for-all-gamma}
      \omega(f_\gamma)=\omega(g_\gamma)\qquad \forall \gamma\in\mathcal{P}.
    \end{equation}
We want to show that $f$ and $g$ are homotopic. This will follow from
the following two lemmas.
For convenience, in this proof we view $\T\subset \C$ as a
multiplicative group, i.e., we represent $t\in \T$ by a complex number $t = e^{i2\pi \theta}$.

\begin{lemma}\label{lem.zero}
Let $h\in C(G,\T)$ and suppose $\bar\omega(h)=(0,0,\dots)$. Then $h$ is
homotopic to $1$.
\end{lemma}
\begin{proof}
  Below we show that there is a continuous function
  $\phi\in C(G,\R)$ 
  such that
  \begin{equation}\label{h-lift}
h(x)=e^{\iu 2\pi\phi(x)}.
    \end{equation}
 
  To construct $\phi$, we first restrict $h$ to $\gamma_0$. Then
 \begin{equation}\label{arg}
\phi(x)=\frac{1}{2\pi}\arg h(x),
  \end{equation}
  where fix a branch of $\arg$ by insisting that $\phi(\xi^0)\in [0,1).$
  Recall that $\xi^k$ stands for the reference point assigned to $\gamma_k$,
  $k=0,1,2,\dots$ (see Figure~\ref{fig:f-cuts}).

The function $\phi_{\gamma_0}\doteq \phi\circ c_{\gamma_0}$
  is continuous on $[0,1)$, because $h$ is continuous on $G$. Furthermore,
$\phi_{\gamma_0}(1-0)=\phi_{\gamma_0}(0),$ because
$\omega_{\gamma_0}(h)=0$. Thus, we found a continuous function $\phi$ on
$\mathcal{P}_0$ that satisfies \eqref{h-lift} on $\mathcal{P}_0$.

Function $\phi$ is extended as a continuous function for the rest of $\mathcal{P}$ by induction.
Suppose it has been continuously defined on $\cP_m, \; m\in\N,$ then for each triangular loop
$\gamma\in \cP_{m+1}\setminus\cP_m$, it is already defined on two sides of $\gamma$ and
thus can be extended by continuity via \eqref{arg}. Since $\omega_\gamma(h)=0$,
$\phi_{\gamma}(1-0)=\phi_{\gamma}(0)$ and $\phi$ is continuous on $\gamma$.
By induction, we obtain a continuous function $\phi$ on $\cP$ that satisfies \eqref{h-lift}.

Since $h$ is continuous on $G$ and $G$ is compact, $h$ is uniformly continuous on $G$
and, consequently, $h$ is uniformly continuous on $\cP\subset G$. We want to show that
$\phi$ is uniformly continuous on $\cP$ as well, so that it can be extended from the dense
subset $\cP$ to the rest of $G$.

The uniform continuity of $h$ and \eqref{arg} imply 
\begin{equation}\label{uni-h}
  \forall 0<\epsilon<\frac{1}{2} \; \exists \delta>0:\; |x-y|<\delta  \;\implies
  d_\T\left(e^{\iu 2\pi \phi(x)}, e^{\iu 2\pi \phi(y)}\right)<\epsilon.
  \end{equation}
  Here and in the remainder of the proof, $x,y\in\cP$ and $|\cdot|$ and $d_\T(\cdot,\cdot)$
  stand for the norm in $\R^2$ and geodesic distance on $\T$ respectively.
  \begin{figure}[h]
	\centering
	\includegraphics[width  =.3\textwidth]{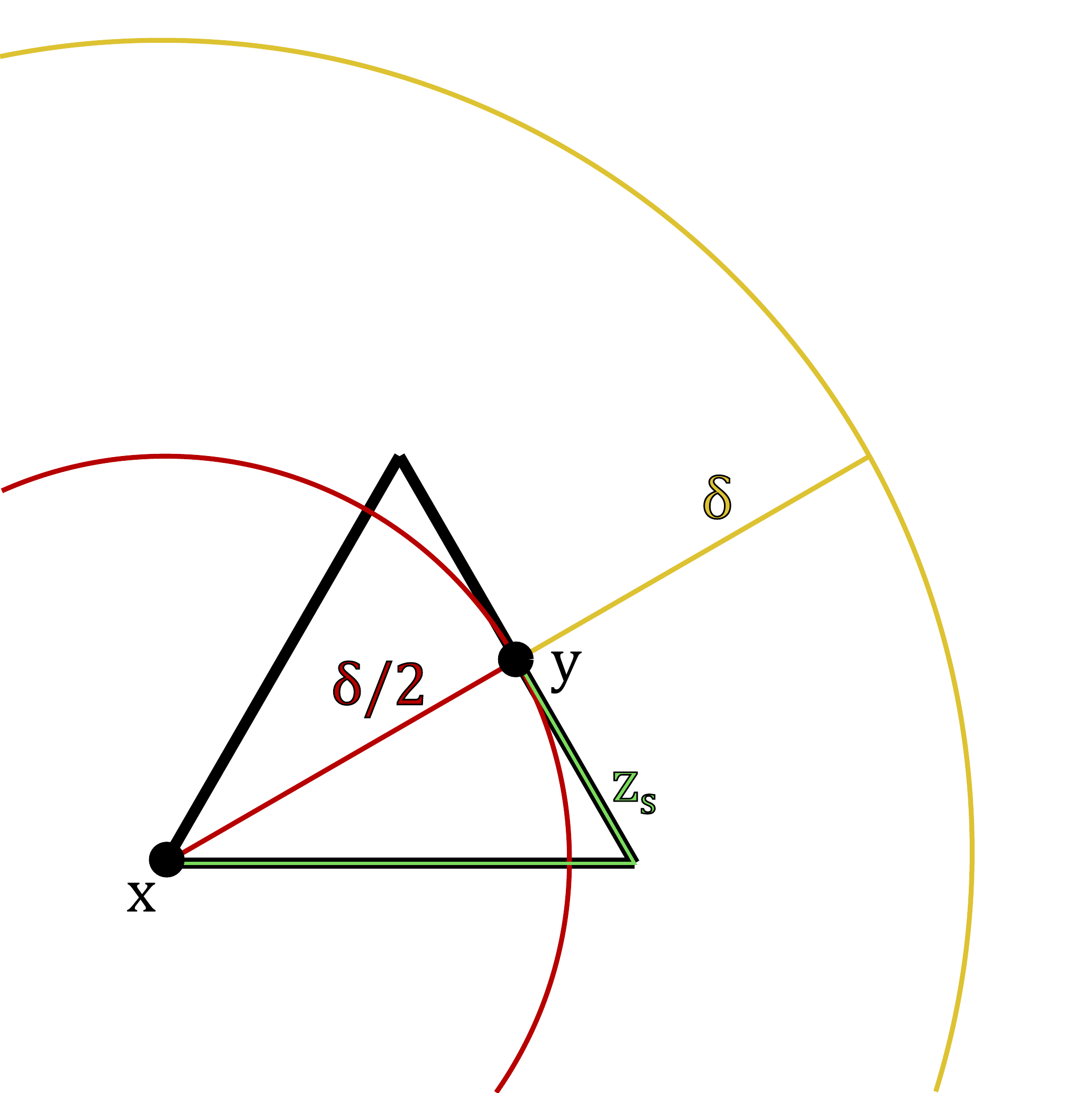}
	\caption{ Two points $x,y\in\cP$ which are at most $\delta/2$
          apart are connected by a path belonging to a $\delta$-ball
          centered at $x$.}
	\label{fig:new-fig}
\end{figure}

  From \eqref{uni-h}  to conclude that $\phi$ is uniformly continuous
  we need to check that $\phi(x)$ and $\phi(y)$ belong to the same branch of $\arg$. This follows
  from the following observation. By elementary geometry, $\forall y\in B_{\delta/2}(x)\cap \cP$ there is a
  continuous path connecting $x,y$: $(z_s,\; s\in [0,1]:\; z_0=x, \; z_1=y)$ such that
  $z_s\in B_{\delta}(x)\cap \cP$ for every $s\in [0,1]$ (see Figure~\ref{fig:new-fig}).

  By \eqref{uni-h}, $|\phi(z_s)-\phi(x)|<\epsilon$ for every $s\in [0,1]$ and, therefore,
  $|\phi(y)-\phi(x)|<\epsilon$ for every $y\in B_{\delta/2}(x)\cap \cP$.
 This shows uniform continuity of $\phi$.
Thus, $\phi$ can be extended by continuity from $\cP$ to $G$.

With \eqref{h-lift} in hand, $F(t,x)=e^{\iu 2\pi\phi(x) t}$ provides the desired homotopy.
\end{proof}

\begin{lemma}\label{lem.f-g}
  Let $f, g\in C(G,\T)$ and $\bar\omega(f)=\bar\omega(g)$. Then $\frac{f}{g}\in C(G,\T)$ and
  $\bar\omega\left(\frac{f}{g}\right)=(0,0,\dots)$.
\end{lemma}
\begin{proof}
  We are going to show that
  on $\cP$, $f$ and $g$ can be written as
  \begin{equation}\label{rep-f-g}
f(x)=e^{\iu 2\pi \phi(x)},\quad g(x)=e^{\iu 2\pi\psi(x)},\quad x\in \cP,
    \end{equation}
    where $\phi_\gamma$ and $\psi_\gamma$ are continuous for every $\gamma$ such that
    $\omega_\gamma(f)=\omega_\gamma(g)=0$. If
    $\omega_\gamma(f)\neq 0$ then
    $\phi_\gamma$ and $\psi_\gamma$ have  a jump discontinuity at a certain point on $\gamma$.
    The size of the jump is the same for both functions and is equal to $\omega_\gamma(f)$.

    The statement of the lemma follows immediately from \eqref{rep-f-g}. Indeed, 
\begin{equation}\label{f/g-lift}
\frac{f}{g}=e^{\iu 2\pi \left(\phi(x)-\psi(x)\right)},
\end{equation}
where $\phi-\psi$ is continuous on  $\cP$, because the jumps are the same for both functions
$\phi$ and $\psi$.
From \eqref{f/g-lift}, we conclude that $\bar\omega\left(\frac{f}{g}\right)=(0,0,\dots)$.

It remains to prove \eqref{rep-f-g}. We prove the representation in \eqref{rep-f-g} for $f$. It works the same
way for $g$. The proof follows the lines of the proof of Lemma~\ref{lem.zero}.

Restrict $f$ to $\gamma_0$. Let $\xi_0$ denote the point in the middle of the base of the triangular
loop $\gamma_0$ (see Figure~\ref{fig:f-cuts}). Choose
$$
\phi(x)=\frac{1}{2\pi}\arg f(x),
$$
where $\arg f(x)$ is an arbitrary fixed branch of the argument. Note that $\arg f(x)$ is continuous along $\gamma_0$
if $\omega_{\gamma_0}=0$ and it undegoes a jump at $\xi_0$ equal to $\omega_{\gamma_0}$
if $\omega_{\gamma_0}\neq 0.$

We proceed by induction. Suppose $\phi$ has been continuously defined on $\cP_N$ then
for each triangular loop $\gamma\in \cP_{N+1}\setminus\cP_N$ it is already defined on two sides
of $\gamma$.  On the remaining side of $\gamma$, we choose a point
$\xi$ in the middle of that side so that $\xi\notin \cP_N$, e.g.,
the reference point $\xi^k$ (see Figure \ref{fig:f-cuts}). If  $\omega_\gamma(f)=0$
then we extend $\phi$ to the rest of $\gamma$ by continuity to obtain a continuous function on $\gamma$.
Otherwise, we let $\phi$ to have a jump at $\xi$. Clearly, the size of the jump is equal to $\omega_\gamma(f)$.
\end{proof}

\section{Appendix: Cut points for SG}\label{appendix}
\setcounter{equation}{0}
\begin{enumerate}
\item
  As before we cut every $G^k, k\in\Z,$ at $z^k$:
  $$
z^k_-=v^k_{1\bar{3}}\quad \mbox{and} \quad z^k_+=v^k_{3\bar{1}}
  $$
  and identify  $z_-^k=z_+^{k+\rho_0}$.
\item
  In addition, for every $1\le m\le N$ and $w\in S^m$.
\begin{enumerate}
\item Denote
 \begin{align*}
  x^k_w&\doteq v^k_{w13\bar{2}} =v^k_{w12\bar{3}},\\
 y^k_w&\doteq v^k_{w23\bar{1}}=v^k_{w21\bar{3}},\\
  z^k_w&\doteq v^k_{w32\bar{1}}=v^k_{w31\bar{2}}.
  \end{align*}
\item Cut at $x_w^k, y_w^k,$ and $z_w^k$:
  \begin{align*}
  x^k_{w,+}=v^k_{w13\bar{2}}, & \quad  x^k_{w,-}  =v^k_{w12\bar{3}},\\
 y^k_{w,+}= v^k_{w23\bar{1}},  & \quad  y^k_{w,-}  =v^k_{w21\bar{3}},\\
  z^k_{w,+}= v^k_{w32\bar{1}}, &  \quad  z^k_{w,-}  =v^k_{w31\bar{2}}.
  \end{align*}
\item Identify
  $$
  x^k_{w,+}= x^{k+\rho_{\ell(w1)}}_{w,-}, \quad
 y^k_{w,+}= y^{k+\rho_{\ell(w2)}}_{w,-}, \quad
  z^k_{w,+}= z^{k+\rho_{\ell(w3)}}_{w,-},
  $$
  where
  $$
\ell(ws)= \sum_{i=1}^m w_i\cdot 3^{i-1} + s\cdot 3^m.
  $$
  \end{enumerate}
\end{enumerate}
  
\end{appendix}

\bibliographystyle{amsplain}
\def\cprime{$'$} \def\cprime{$'$} \def\cprime{$'$}
\providecommand{\bysame}{\leavevmode\hbox to3em{\hrulefill}\thinspace}
\providecommand{\MR}{\relax\ifhmode\unskip\space\fi MR }
\providecommand{\MRhref}[2]{%
  \href{http://www.ams.org/mathscinet-getitem?mr=#1}{#2}
}
\providecommand{\href}[2]{#2}

\end{document}